\newcommand{\mypiang}{\operatorname{ang_{\pi}}}
\newcommand{\maximize}{\operatorname{Maximize\quad}}
\DeclareMathOperator{\la}{la}
\DeclareMathOperator{\tw}{tw}
\DeclareMathOperator{\bt}{bt}
\DeclareMathOperator{\bw}{bw}
\DeclareMathOperator{\sep}{sep}
\DeclareMathOperator{\myseg}{seg}
\DeclareMathOperator{\myarc}{arc}
\newcommand{\nv}[1]{#1}
\newtheorem{corollary}{Corollary}{\bfseries}{\itshape}
{\bfseries}{\itshape}
\newtheorem{example}{Example}{\itshape}{\rmfamily}
{\itshape}{\rmfamily}
{\itshape}{\rmfamily}
{\itshape}{\rmfamily}
{\itshape}{\rmfamily}
\newtheorem{proposition}{Proposition}{\bfseries}{\itshape}
{\itshape}{\rmfamily}
{\itshape}{\rmfamily}
{\itshape}{\rmfamily}
\begin{document}
% --------------------------------------------------------------------
\doi{} % doi number
% --------------------------------------------------------------------
%\Issue{0}{0}{0}{0}{0} % volume, number, start page, end page, year
% --------------------------------------------------------------------
\HeadingAuthor{Kryven et al.} % short list of authors for header
\HeadingTitle{Drawing Graphs on Few Circles and Few Spheres}
%short title for header
% --------------------------------------------------------------------
\title{Drawing Graphs \\ on Few Circles and Few Spheres} % Paper title
\Ack{A preliminary version of this work appeared in Proc.\ 4th Conf.\
  Algorithms \& Discrete Appl.\ Math.\ (CALDAM'18), volume 10743 of
  Lect.\ Notes Comp. Sci., pages 164--178, Springer-Verlag.  M.K.\
  acknowledges support by DAAD.  A.R.\ acknowledges support by
  Erasmus+.  A.W.\ acknowledges support by DFG grant WO
  758/9-1.} % Grants Acknowledgments
% --------------------------------------------------------------------

\author[inst1]{Myroslav Kryven}{myroslav.kryven@uni-wuerzburg.de}
% Authors
\author[inst2]{Alexander Ravsky}{alexander.ravsky@uni-wuerzburg.de}
% Authors
\author[inst1]{Alexander Wolff}{orcid.org/0000-0001-5872-718X}
% Authors

\affiliation[inst1]{Universit\"at W\"urzburg, Germany}
% Affiliations
\affiliation[inst2]{Pidstryhach Institute for Applied Problems of
Mechanics and
Mathematics, National Academy of Sciences of Ukraine, Lviv, Ukraine}
% Affiliations

\maketitle

%% --------------------------------------------------------------------
%       Abstract
%% --------------------------------------------------------------------

\begin{abstract}
  Given a drawing of a graph, its \emph{visual complexity} is
  defined as the number of geometrical entities in the drawing, for
  example, the number of segments in a straight-line drawing or the
  number of arcs in a circular-arc drawing (in 2D).  Recently,
  Chaplick et al.\ [GD 2016] introduced a different measure
  for the visual complexity, the \emph{affine cover number}, which is
  the minimum number of lines (or planes) that together cover a
  crossing-free straight-line drawing of a graph $G$ in 2D (3D).
  In this paper, we introduce the \emph{spherical cover number}, which
  is the minimum number of circles (or spheres) that together cover a
  crossing-free circular-arc drawing in 2D (or 3D).  It turns out that
  spherical covers are sometimes significantly smaller than affine
  covers.
  For complete, complete bipartite, and platonic graphs,
  we analyze their spherical cover numbers and compare them to their
  affine cover numbers as well as their segment and arc numbers.  We
  also link the spherical cover number to other graph parameters such
  as
  treewidth and linear arboricity.
\end{abstract}

\section{Introduction}

A drawing of a given graph can be evaluated by many different quality
measures depending on the concrete purpose of the drawing.
Classical examples are the number of crossings, the ratio between the
lengths of the shortest and the longest edge, or the angular
resolution.
Clearly, different layouts (and layout algorithms) optimize different
measures.  Hoffmann et al.~\cite{hkkr-qrmgd-CCCG14} studied ratios
between optimal values of quality measures implied by different graph
drawing styles.  
\nv{For example, there is a circular-arc drawing of the icosahedron with
perfect angular resolution (that is, the edges are equiangularly
spaced around each vertex), whereas the best
straight-line drawing has an angular resolution of at
most~$15^{\circ}$, which yields a ratio of~$72^{\circ}/15^{\circ}=4.8$.
Hoffmann et al.\ also
constructed a family of graphs whose straight-line drawings have
unbounded edge--length ratio, whereas there are
circular-arc drawings with edge--length ratios arbitrarily close
to 3~\cite[Figures~4 and~6]{hkkr-qrmgd-CCCG14}. }

A few years ago, a new type of quality measure was introduced: the
number of geometric objects that are needed to draw a graph given a
certain style.  Schulz~\cite{s-dgfa-JGAA15} \nv{coined} this measure
the
\emph{visual complexity} of a drawing.  More concretely, Dujmovi{\'c}
et al.~\cite{Dujmovic2007DrawingsOP} defined the \emph{segment number}
$\myseg(G)$ of a graph~$G$ to be the minimum number of straight-line
segments over all straight-line drawings of~$G$.  Similarly,
Schulz~\cite{s-dgfa-JGAA15} defined the \emph{arc number} $\myarc(G)$
with respect to circular-arc drawings of~$G$ and showed that
circular-arc drawings are an improvement over straight-line drawings
not only in terms of visual complexity but also in terms of area
consumption; \nv{see Schulz~\cite[Theorem~1]{s-dgfa-JGAA15}}.
Mondal et al.~\cite{Mondal2013} showed how to minimize
the number of segments in convex drawings of 3-connected planar
graphs \nv{both on and off} the grid. 
Igamberdiev et al.~\cite{ims-dpc3c-JGAA17}
fixed a bug in the algorithm of Mondal et al.\ and compared the
resulting algorithm to two other algorithms in terms of angular
resolution, edge length, and face
aspect ratio.
\nv{H\"ultenschmidt et al.~\cite{hkms-dpgfg-JGAA18} studied the visual
complexity of drawings of planar graphs.  For example, they showed upper
bounds for the number of segments and arcs in drawings of trees,
triangulations, and general planar graphs.}
Recently, Kindermann et al.~\cite{kms-eaadfs-JGAA18}
presented a user study showing that people without mathematical or
computer science background prefer drawings that consist of few line
segments, that is, drawings of low visual complexity.
\nv{(Users with such a background had a slight tendency to prefer
drawings
that are more symmetric.)  The study, however, was done for trees
only.}

\nv{%
Durocher et al.~\cite{durocher} investigated the complexity
of computing minimum-segment drawings (and related problems).
Among others, they showed that it is NP-hard to compute the segment
number of plane graphs (that is, planar graphs with fixed
embedding), even if the graphs have maximum degree~4.
As an open problem, the authors suggested to study \emph{minimum-line
  drawings}, which they define to be minimum-segment drawings whose
edges lie in the union of the smallest number of straight lines
(among all minimum-segment drawings).}

\nv{Chaplick et al.~\cite{cflrvw-dgflf-GD16} defined a similar quality
  measure, which they call the \emph{affine cover number}.}  Given a
graph~$G$ \nv{and two integers~$l$ and~$d$ with $0<l<d$}, they
defined~$\rho_d^l(G)$ to be the minimum
number of $l$-dimensional affine subspaces that together cover a
crossing-free straight-line drawing of~$G$ in $d$-dimensional space.
 \nv{It turned out that it
  suffices to consider $l\le2$ because otherwise
  $\rho^l_d(G)=1$. In~\cite{cflrvw-dgflf-GD16} the authors also show
  that every graph can be drawn in~$3$-space as effectively as in high
  dimensional spaces, i.e., for any integers $1 \le l \le 3\le d$ and
  for any graph $G$, it holds that $\rho^l_d(G)=\rho^l_3(G)$.
  Note that, in general, the minimum-line drawings mentioned above are
  different from $\rho^1_2$-optimal drawings
  since there are graphs that do not have a $\rho^1_2$-optimal cover
  with the minimum number of segments; see
  Example~\ref{expl:rho_vs_ml} in Section~\ref{sec:discussion}.}

Among others, Chaplick et al.\ showed that the affine
cover number can be asymptotically smaller than the segment
number, constructing 
an infinite family of triangulations~$(T_n)_{n>1}$ such that $T_n$ has
$n$ vertices and $\rho^1_2(T_n)=O(\sqrt{n})$, but $\myseg(T_n)
=\Omega(n)$.  On the other hand, they showed that
$\myseg(G) = O(\rho^1_2(G)^2)$ for any connected planar graph $G$.
In a companion paper~\cite{cflrvw-cdgfl-WADS17}, Chaplick et al.\ show
that most variants of the affine cover number are NP-hard to compute.

\paragraph{Our contribution.}
Combining the approaches of Schulz and Chaplick et al.,
we introduce the \emph{spherical cover number} $\sigma_d^l(G)$ of a
graph~$G$ to be the minimum number of $l$-dimensional
spheres in $\mathbb{R}^d$ such that $G$ has a crossing-free
circular-arc drawing that is contained in the union of these
spheres.
\nv{Note that $\sigma_2^1(G)$ is defined for planar graphs only.}

\nv{Firstly, we provide some basic observations and preliminary
results that our work heavily relies on.}

We obtain bounds for the spherical cover number~$\sigma_3^2$ of the
complete and complete bipartite graphs which show that spherical
covers can be asymptotically smaller than affine covers; see
Table~\ref{tab:affine_cover_K} and Section~\ref{sec:complete}.

Then we turn to platonic graphs, that is, to 1-skeletons of platonic
solids; see Section~\ref{sec:platonic}.  These graphs possess several
nice properties: they are regular, planar and Hamiltonian.
We use them as indicators to compare the above-mentioned measures of
visual complexity; we provide bounds for their segment and arc numbers
(see Table~\ref{tab:seg_arc}) as well as for their affine and
spherical cover numbers (see Table~\ref{tab:affine_cover}).  For the
upper bounds, we
present straight-line drawings with (near-) optimal
affine cover number~$\rho_2^1$
and circular-arc
drawings with optimal spherical cover number~$\sigma_2^1$; see
Figures~\ref{fig:cube}--\ref{fig:icosahedron}.
We note that sometimes
optimal spherical covers are more symmetric than optimal affine covers. For example,
it seems that there is no symmetric drawing of the cube that is $\rho_2^1$-optimal,
whereas there are symmetric $\sigma_2^1$-optimal drawings; see
Fig.~\ref{fig:cube}.

For general graphs, we present lower bounds
for the spherical cover numbers by means of many combinatorial
graph characteristics, in particular, by the edge-chromatic number,
treewidth, balanced separator size, linear arboricity, and bisection
width; see Section~\ref{sec:lower}.

We decided to start with our more concrete (and partially stronger)
results and postpone the structural observations to
Section~\ref{sec:lower}, although this means that we'll sometimes have
to use forward references to Theorem~\ref{LowerSigma13}, our main
result in Section~\ref{sec:lower}.  Finally, we formulate an integer
linear program (MIP) that yields lower bounds for the segment number
of embedded planar graphs; see Section~\ref{sec:ilp}.  For the
platonic solids, the lower bounds (see Table~\ref{tab:piangles}) that
we computed using the MIP turned out to be tight.  We conclude with a
few open problems.

\section{Preliminary Results}
\label{sec:preliminaries}
In this section we state some preliminary results.
Firstly, we note that any drawing with straight-line segments and
circular arcs
can be transformed into a drawing that uses circular arcs only.
\begin{proposition}
  \label{prop:sigma-le-rho}
  Given a graph $G$ and a drawing~$\Gamma$ of~$G$ that represents
  edges as straight-line segments or circular arcs on $r$
  $l$-dimensional planes or spheres in~$\mathbb{R}^d$, there is a
  circular-arc drawing~$\Gamma'$ of $G$ on $r$ $l$-dimensional spheres
  in $\mathbb{R}^d$.  In particular, $\sigma_d^l(G)\le\rho_d^l(G)$ for
  any graph~$G$ and $1\le l<d$.
\end{proposition}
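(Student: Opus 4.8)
The plan is to realize the whole transformation by a single \emph{sphere inversion} $\iota$ of $\R^d$ (a M\"obius transformation). The key facts I would invoke are standard in inversive geometry: an inversion centered at a point $c\in\R^d$ is a conformal diffeomorphism of $\R^d\setminus\{c\}$, and it maps the family of \emph{generalized $k$-spheres} to itself for every $k$, where a generalized $k$-sphere is either a $k$-dimensional sphere or a $k$-dimensional affine subspace ($k$-plane). In particular, $\iota$ preserves the dimension $k$: it sends each $l$-plane missing $c$ to an $l$-sphere (through $c$) and each $l$-sphere missing $c$ to an $l$-sphere, and likewise it sends a line (a $1$-plane) missing $c$ to a circle and a circle missing $c$ to a circle. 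First I would fix the given drawing $\Gamma$, contained in the union of the $r$ planes and spheres $O_1,\dots,O_r$, and set $\Gamma'=\iota(\Gamma)$.

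The center $c$ must be chosen so that the construction behaves as intended, and it suffices to require that $c$ lies on none of $O_1,\dots,O_r$ and that $c\notin\Gamma$. The first condition guarantees that every $O_i$, whether a plane or a sphere, is mapped to an honest $l$-sphere, so that $\Gamma'$ is covered by exactly $r$ spheres; it also guarantees that the line or circle supporting each segment or arc of $\Gamma$, which lies on the corresponding $O_i$ and hence also misses $c$, is mapped to a circle, so that each edge of $\Gamma'$ is a genuine circular arc on one of the $r$ spheres. The second condition ensures that $\iota$ is a homeomorphism on a neighborhood of $\Gamma$ and that $\Gamma'$ is bounded. Since each $O_i$ is $l$-dimensional with $l<d$ and $\Gamma$ is a one-dimensional set, the forbidden locus is a finite union of sets of measure zero in $\R^d$; hence almost every $c$ is admissible, and in particular an admissible $c$ exists.

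It then remains to check that $\Gamma'$ is a valid crossing-free circular-arc drawing. Because $\iota$ restricts to a homeomorphism of $\R^d\setminus\{c\}$ and $c\notin\Gamma$, vertices map to distinct points, each edge maps to a simple arc with the correct endpoints, and two edges of $\Gamma'$ meet exactly where their preimages do, so $\Gamma'$ inherits crossing-freeness from $\Gamma$. Restricted to any supporting circle (which misses $c$), $\iota$ is a homeomorphism onto the image circle, so a connected arc maps to a connected arc rather than wrapping around, which is what makes each edge a single circular arc. The main point requiring care is precisely this simultaneous bookkeeping of incidences under inversion, namely verifying at once that dimensions are preserved, that planes and spheres alike become spheres, and that segments and arcs alike become arcs, together with the genericity argument for $c$; none of these is deep, but all must hold simultaneously for the same center.

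The inequality then follows by specialization: a $\rho_d^l$-optimal drawing is a crossing-free straight-line drawing whose edges are segments lying on $\rho_d^l(G)$ $l$-planes, which is a special case of the hypothesis. Applying the construction yields a crossing-free circular-arc drawing on the same number of $l$-spheres, whence $\sigma_d^l(G)\le\rho_d^l(G)$.
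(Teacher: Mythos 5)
Your proposal is correct and follows essentially the same route as the paper: both apply a single inversion whose center avoids all $r$ supporting planes and spheres, and use the standard fact that inversion sends generalized $l$-spheres missing the center to honest $l$-spheres (and supporting lines/circles to circles) while acting as a homeomorphism away from the center, so crossing-freeness and arc-hood are preserved. The only cosmetic difference is that the paper fixes an entire inversion sphere $S$ disjoint from the supporting objects (which in particular forces its center off all of them), whereas you choose the center generically by a measure-zero argument; your extra condition $c\notin\Gamma$ is redundant since $\Gamma$ lies in the union of the supporting objects.
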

\begin{proof}
Take an arbitrary sphere $S\subset\mathbb{R}^d$
that does not intersect any of the $r$ spheres or planes that support
the given drawing~$\Gamma$ of~$G$.  Without loss of generality,
assume that $S$ is centered at the origin. \nv{This implies that none
of the spheres supporting~$\Gamma$ goes through the origin}.
Let~$\rho$ be the radius of~$S$.
Invert the drawing with respect to~$S$ by the map $x \mapsto \rho
x/\|x\|$.  The resulting drawing is a circular-arc drawing of~$G$ on
$r$ $l$-dimensional spheres in~$\mathbb{R}^d$.
Indeed, using basic properties of the inversion (see, for instance,~
\cite{e-ln-06} or~\cite[Chapter~5.1]{brannan_esplen_gray_2011}),
it can be proved that this inversion transforms planes into spheres of
the same dimension and preserves spheres, in other words, the set
of images of points on a sphere forms another sphere of the same
dimension.
\end{proof}
Therefore, we may consider any line a ``circle of infinite radius'',
any plane a ``sphere of infinite radius'', and any affine cover a
spherical cover.  By ``line'' we always mean a straight line.

Trivial bounds on $\sigma^1_3(G)$ follow from the fact that every
circle is contained in a plane and that we have more flexibility
\nv{when}
drawing in 3D than in 2D.
\nv{Note again that $\sigma^1_2(G)$ and $\sigma^1_3(G)$ are only defined
when $G$ is planar.}
\begin{proposition} \label{prop:sigma-ge-rho}
  For any graph~$G$, it holds that $\rho^2_3(G)\le\sigma^1_3(G)$.
  If~$G$ is planar, we additionally have
$\sigma^1_3(G)\le\sigma^1_2(G)$.
\end{proposition}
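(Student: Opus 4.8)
The plan is to prove the two inequalities separately: the first by straightening arcs into chords, the second by the trivial inclusion of the plane into $3$-space.

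For $\rho^2_3(G)\le\sigma^1_3(G)$ I would start from a $\sigma^1_3$-optimal circular-arc drawing $\Gamma$ of $G$ on $k=\sigma^1_3(G)$ circles $C_1,\dots,C_k$ in $\mathbb{R}^3$, and let $P_i$ be the unique plane spanned by $C_i$. Keeping every vertex where it is, I would replace each edge, drawn as an arc of some $C_i$, by the chord of that arc. The chord lies in $P_i$, so the resulting straight-line drawing $\Gamma'$ is covered by the at most $k$ distinct planes among $P_1,\dots,P_k$; it then remains to argue that $\Gamma'$ can be made crossing-free, which yields $\rho^2_3(G)\le k=\sigma^1_3(G)$.

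Here I would distinguish two types of edge pairs. If two edges are drawn on the same circle $C_i$, then as arcs they have disjoint interiors, which forces their four endpoints to appear in non-interleaving cyclic order on $C_i$ (interleaving endpoints would make any choice of arcs overlap); hence the two chords connect non-interleaving pairs and therefore do not cross inside $P_i$. If two edges lie on distinct planes $P_i\neq P_j$, then any intersection point must lie on the line $\ell=P_i\cap P_j$, and a genuine crossing requires both chords to meet $\ell$ in the very same point — a non-generic coincidence. The main obstacle, and the step I would treat most carefully, is to remove these remaining coincidences by a general-position perturbation: each vertex may be moved only within the planes carrying its incident edges (so a vertex shared by $C_i$ and $C_j$ is confined to $\ell$), and I must verify that these constrained motions still suffice to separate, pairwise, the finitely many points at which the edges of $P_i$ and the edges of $P_j$ puncture $\ell$, without reintroducing same-circle crossings.

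For the second inequality I would use that $\sigma^1_2(G)$ is defined precisely because $G$ is planar. Fixing a $\sigma^1_2$-optimal circular-arc drawing on $k=\sigma^1_2(G)$ circles in $\mathbb{R}^2$ and composing with the standard embedding $\mathbb{R}^2\hookrightarrow\mathbb{R}^3$, $x\mapsto(x,0)$, turns each planar circle into a $1$-sphere of $\mathbb{R}^3$ and preserves incidences and crossing-freeness. The image is therefore a crossing-free circular-arc drawing of $G$ on $k$ circles in $\mathbb{R}^3$, whence $\sigma^1_3(G)\le\sigma^1_2(G)$. This direction needs no perturbation and is immediate.
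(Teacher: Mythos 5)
Your proof of the second inequality is correct and is exactly the paper's argument: a crossing-free circular-arc drawing on $k$ circles in $\mathbb{R}^2$ becomes, under the inclusion $\mathbb{R}^2\hookrightarrow\mathbb{R}^3$, a crossing-free circular-arc drawing on $k$ circles in $\mathbb{R}^3$. The problem is the first inequality, where your chord-replacement argument has a genuine gap.

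Your case analysis distinguishes only ``same circle'' and ``distinct planes,'' silently assuming that distinct covering circles span distinct planes. Nothing in the definition of $\sigma^1_3$ prevents two of the $k$ circles from being coplanar, and in that case chord replacement can create a crossing that is \emph{transversal inside a common plane}; such a crossing is stable under every sufficiently small motion of the endpoints, so no general-position perturbation can remove it. Concretely, let $G$ be a triangle plus one disjoint edge $uv$; then $\sigma^1_3(G)=2$, and one optimal drawing places $uv$ as the $270^\circ$ arc of the unit circle $C_1$ from $(1,0)$ to $(0,1)$ (the long way around), and the triangle inscribed in a small circle $C_2$ centred at $(0.55,0.55)$ with radius $0.15$, so that $C_2$ lies strictly inside $C_1$ but straddles the line $x+y=1$, with exactly one triangle vertex on its far side. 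All arcs are pairwise disjoint, yet the chord of $uv$ is the segment from $(1,0)$ to $(0,1)$ and it crosses two sides of the inscribed triangle transversally. Here the vertices are not even constrained (everything lives in one plane), yet your perturbation step cannot help: what is needed is a large redrawing, which your method does not provide. The underlying issue is that replacing a long arc by its chord changes the isotopy class of the drawing within its plane, so the crossings created are essential for the given vertex positions, not ``non-generic coincidences.'' (Your distinct-planes case is also not routine as deferred: a vertex incident to edges in three planes in general position is completely pinned, and an edge both of whose endpoints lie on $\ell=P_i\cap P_j$ has its chord contained in $\ell$, where it can overlap chords coming from the other plane.)

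For comparison, the paper does not straighten anything: it treats the first inequality as immediate from the observation that every circle is contained in a plane, so a circular-arc drawing on $k$ circles already lies in the union of at most $k$ planes. You rightly noticed that the definition of $\rho^2_3$ demands a \emph{straight-line} drawing, so some conversion is genuinely needed---and this is a subtle point, since a crossing-free \emph{curved} drawing on $k$ planes cannot in general be straightened while keeping $k$ planes (book embeddings realize $K_n$ with circular-arc edges on $\lceil n/4\rceil$ planes, whereas $\rho^2_3(K_n)=\Omega(n^2)$). But the chord-replacement-plus-perturbation route you chose does not close that gap.
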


The spherical cover number $\sigma^2_3(G)$ can be considered a
characteristic of a graph $G$ that lies between its \emph{thickness}
$\theta(G)$, which is the smallest number of planar graphs whose union
is~$G$, and its \emph{book thickness} $\bt(G)$, also called page
number, which is the minimum number of pages (halfplanes) needed to
draw the edges of~$G$ when the vertices lie on the \emph{spine} of the
book (the line that bounds all halfplanes).

\begin{proposition}
  \label{prop:thickness}
  For every graph~$G$, it holds that $\theta(G) \le \sigma^2_3(G) \le
  \lceil \bt(G)/2\rceil$.
\end{proposition}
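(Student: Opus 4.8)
The plan is to prove the two inequalities separately. For the lower bound $\theta(G)\le\sigma^2_3(G)$, I would start from an optimal spherical cover, that is, a crossing-free circular-arc drawing of $G$ contained in the union of $r=\sigma^2_3(G)$ spheres $S_1,\dots,S_r$. First I would argue that each edge, being a nondegenerate circular arc, lies entirely on one of the spheres: the arc is part of a single circle $\gamma$, and if $\gamma$ were contained in none of the $S_j$, then each intersection $\gamma\cap S_j$ would consist of at most two points, so the arc (an infinite point set) could not be covered by the finitely many spheres. Hence I can assign every edge to a sphere that contains it, and the edges assigned to $S_j$ form a subgraph $G_j$ that is drawn without crossings on $S_j$.

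Next I would observe that a crossing-free drawing on a topological $2$-sphere is a planar embedding: since the drawing is finite it misses some point $p\in S_j$, and stereographic projection from $p$ turns the drawing into a crossing-free drawing in the plane. Thus each $G_j$ is planar, and since the $r$ subgraphs together contain all edges of $G$, their union is $G$, giving $\theta(G)\le r=\sigma^2_3(G)$.

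For the upper bound I would use a $k$-page book embedding of $G$, where $k=\bt(G)$. I would place all vertices on a common circle $C\subset\mathbb{R}^3$ in the spine order and consider the pencil of spheres through $C$; any two distinct such spheres meet exactly in $C$. Each sphere $S$ is split by $C$ into two caps, and I would host one page on each cap, pairing the $k$ pages so that $\lceil k/2\rceil$ spheres suffice. Because distinct spheres of the pencil meet only in $C$, and the arcs meet $C$ only at their endpoints, arcs lying on different spheres (and on the two caps of a single sphere) can only share common vertices; within a single cap the page condition guarantees that the chords do not interleave, so the resulting drawing is crossing-free overall.

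The step that needs the most care, and the one I would treat as the main obstacle, is realizing each page by genuine circular arcs inside a single cap. Here I would exploit stereographic projection from the pole of $S$ opposite to the cap: this projection is a homeomorphism that fixes $C$ pointwise, maps the cap onto the disk bounded by $C$, and sends lines and circles to circles. Placing the vertices in convex position on $C$ in spine order, the non-interleaving condition of a page guarantees that its edges, drawn as straight chords, are pairwise non-crossing inside the disk; pulling this chord diagram back to the cap via the inverse projection turns each straight chord into a circular arc lying in the cap, preserving both the absence of crossings and the endpoints on $C$. Combining the two inequalities yields $\theta(G)\le\sigma^2_3(G)\le\lceil\bt(G)/2\rceil$.
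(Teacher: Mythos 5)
Your proof is correct and takes essentially the same approach as the paper: the lower bound by observing that each sphere carries a crossing-free, hence planar, subgraph whose union is $G$, and the upper bound by placing the spine vertices on the common circle of a pencil of $\lceil \bt(G)/2\rceil$ spheres and drawing one page per hemisphere (cap). Your write-up simply makes explicit the details the paper leaves implicit, namely the assignment of each arc to a single sphere and the realization of each page by genuine non-crossing circular arcs via stereographic projection.
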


\begin{proof}
  Each sphere covers a planar subgraph of $G$, so $\sigma^2_3(G)$ is
  bounded from below by $\theta(G)$.
   On the other hand, given a book embedding of a graph $G$ with the
  minimum number of pages (equal to $\bt(G)$),
   we put the vertices from the spine along a circle which is the
  common intersection of $\lceil \bt(G)/2\rceil$ spheres;
  see Fig.~\ref{fig:sph_cov_K_n}.  Then, for
  each page, we draw all its edges as arcs onto a hemisphere.
  Thus,
  we obtain a drawing witnessing $\sigma^2_3(G)\le \lceil
  \bt(G)/2\rceil$.
\end{proof}

\nv{To bound $\sigma^1_2(G)$ and $\sigma^1_3(G)$ for the platonic
solids in Section~\ref{sec:platonic} from below we use a
combinatorial argument similar to that in Lemma~7(a) and Lemma~7(b)
in~\cite{cflrvw-dgflf-GD16}
which is based on the fact that each vertex
of degree at least 3 must be covered by at least two lines and
two lines can cross at most once, therefore, providing a lower
bound on the number of lines given the number of vertices. We use a
similar argument together with the fact that two circles can cross at
most twice.}
\begin{proposition}
\label{prop:combinatorial_argument}
For any integer $d \ge 1$ and any graph~$G$ with $n$ vertices
and $m$ edges, the
following bounds hold:
\begin{enumerate}[label=(\alph*)]
\item $\sigma^1_d(G) \geq \frac 12\left(1+\sqrt{1+2 \sum_{v\in
V(G)}\left\lceil\frac{\deg
v}{2}\right\rceil\left(\left\lceil\frac{\deg
v}2\right\rceil-1\right)}\right)$;
\label{thm:rho_mod_1}
\item $\sigma^1_d(G) \geq \frac 12\left(1+\sqrt{2m^2/n-2m+1}\right)$
for any graph~$G$ with $m\ge n\ge 1$;
\label{thm:rho_mod_2}
\end{enumerate}
\end{proposition}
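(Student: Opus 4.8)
The plan is to fix a crossing-free circular-arc drawing of $G$ whose arcs lie on $k := \sigma^1_d(G)$ distinct circles and to extract two inequalities from it: a \emph{local} one translating vertex degrees into a lower bound on the number of circles through each vertex, and a \emph{global} one bounding how often the $k$ circles can share points.

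First I would establish the local fact suggested in the paragraph preceding the statement: every vertex $v$ lies on at least $\lceil\deg v/2\rceil$ of the circles. Indeed, if an edge incident to $v$ is drawn as an arc on a circle $C$, then that arc leaves $v$ along one of the two directions tangent to $C$ at $v$. Two edges drawn on $C$ at $v$ must leave in these two distinct directions; a third edge on $C$ would repeat a direction and hence overlap an existing arc near $v$, contradicting that the drawing is crossing-free. So at most two of the $\deg v$ edges at $v$ lie on any single circle, and $v$ is covered by $c_v \ge \lceil\deg v/2\rceil$ circles.

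Next comes the double counting. Two distinct circles meet in at most two points (three common points would determine a unique circle), so for each of the $\binom{k}{2}$ pairs of circles at most two vertices lie on both. Counting incident circle-pairs at vertices in two ways gives
\[
  \sum_{v\in V(G)}\binom{c_v}{2} \;=\; \sum_{\{C,C'\}}\bigl|\{v: v\in C\cap C'\}\bigr| \;\le\; 2\binom{k}{2} \;=\; k(k-1).
\]
Since $\binom{\cdot}{2}$ is monotone and $c_v\ge\lceil\deg v/2\rceil$, the left-hand side is at least $\tfrac12\sum_{v}\lceil\tfrac{\deg v}{2}\rceil(\lceil\tfrac{\deg v}{2}\rceil-1)=:A/2$. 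Thus $k^2-k-A/2\ge 0$; keeping the positive root of this quadratic inequality yields exactly bound~\ref{thm:rho_mod_1}. For~\ref{thm:rho_mod_2} I would simply weaken $A$ into an expression in $m$ and $n$. Using $\lceil\deg v/2\rceil\ge \deg v/2$ together with the fact that $x(x-1)$ is increasing for $x\ge\tfrac12$ gives $\lceil\tfrac{\deg v}{2}\rceil(\lceil\tfrac{\deg v}{2}\rceil-1)\ge \tfrac{\deg v}{2}(\tfrac{\deg v}{2}-1)$ for every $v$, so
\[
  A \;\ge\; \tfrac14\sum_{v}(\deg v)^2-\tfrac12\sum_{v}\deg v.
\]
Now $\sum_v\deg v=2m$ and, by the Cauchy--Schwarz inequality, $\sum_v(\deg v)^2\ge (\sum_v\deg v)^2/n=4m^2/n$, whence $A\ge m^2/n-m$. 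Substituting into~\ref{thm:rho_mod_1} gives $1+2A\ge 2m^2/n-2m+1$ and hence the claimed bound; the hypothesis $m\ge n$ is precisely what keeps the radicand (and thus the whole bound) nonnegative and nontrivial.

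The only genuinely delicate point is the local observation in the second step — that no three edges at a vertex can share a circle — which must be argued carefully from the crossing-free, arc-on-circle structure of the drawing rather than taken for granted. Everything after that is double counting, the monotonicity of $\binom{\cdot}{2}$, and the standard convexity estimate on the degree sequence.
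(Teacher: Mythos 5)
Your proof is correct and takes essentially the same approach the paper intends: the paper gives no detailed proof, only the sketch that each vertex of degree $\deg v$ must be covered by at least $\lceil \deg v/2\rceil$ circles (at most two arcs per circle at a vertex) and that two circles cross at most twice, which is exactly your local observation followed by double counting over circle pairs. Your derivation of part (b) from part (a) via $\lceil \deg v/2\rceil(\lceil \deg v/2\rceil-1)\ge\frac{\deg v}{2}(\frac{\deg v}{2}-1)$ and Cauchy--Schwarz is the standard route and carries through, including the role of the hypothesis $m\ge n$.
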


\section{Complete and Complete Bipartite Graphs}
\label{sec:complete}

In this section we investigate the spherical cover numbers of complete
graphs and complete bipartite graphs.  We first cover these graphs by
spheres then by circular arcs, in 3D (and higher dimensions).

\begin{theorem}
  \sloppy ~~~~~~~~~~~~~

  \vspace{-1ex}
  \label{thm:sigma23K}
  \begin{enumerate}[label=(\alph*),itemsep=0pt]
    \item For any $n \ge 3$, it holds that $\lfloor
    (n+7)/6\rfloor\le\sigma^2_3(K_n)\le \lceil n/4\rceil$.
    \label{thm:sigma23K_Kn}

    \item For any $1 \le p \le q$, it holds that $pq/(2p+2q-4) \le
    \sigma^2_3(K_{p,q}) \le p$ and, if additionally $q>p(p-1)$,
    it holds that $\sigma^2_3(K_{p,q})=\lceil p/2\rceil$.
    \label{thm:sigma23K_Kpq}
  \end{enumerate}
\end{theorem}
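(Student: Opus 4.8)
The plan is to treat the two inequalities in each part separately: I would obtain the lower bounds by an Euler-type edge count and the upper bounds by explicit circular-arc drawings. The one geometric idea that drives the upper bounds is that arbitrarily many vertices may be placed on a single circle that is the common intersection of all the spheres we use, which is exactly where spherical covers gain over affine covers.

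For the lower bounds I would start from the observation that in any spherical cover each sphere carries a crossing-free circular-arc drawing, and such a drawing is, via stereographic projection, a planar embedding of a simple graph. Hence, if a cover uses $r$ spheres and sphere $S_i$ contains $n_i\le n$ of the vertices, then the edges drawn on $S_i$ number at most $3n_i-6\le 3n-6$ for $K_n$, and (since each sphere then carries a planar \emph{bipartite} subgraph) at most $2n_i-4\le 2(p+q)-4$ for $K_{p,q}$. As every edge is drawn on exactly one sphere, summing over the spheres yields $\binom{n}{2}\le r(3n-6)$ and $pq\le r(2p+2q-4)$, respectively. Solving for $r$ gives $\sigma^2_3(K_{p,q})\ge pq/(2p+2q-4)$ directly, and $\sigma^2_3(K_n)\ge\binom{n}{2}/(3n-6)$; using integrality, the latter simplifies to $\lfloor(n+7)/6\rfloor$, a short computation I would relegate to the write-up by noting $\binom{n}{2}/(3n-6)=(n+1)/6+1/(3(n-2))$ and checking its ceiling against $\lfloor(n+7)/6\rfloor$.

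For the upper bounds I would give constructions. For $K_n$ I invoke Proposition~\ref{prop:thickness} together with the classical fact $\bt(K_n)=\lceil n/2\rceil$, which gives $\sigma^2_3(K_n)\le\lceil\bt(K_n)/2\rceil=\lceil\lceil n/2\rceil/2\rceil=\lceil n/4\rceil$. For $K_{p,q}$ I place all $q$ vertices of the larger side on one common circle $C$ and take spheres that all intersect in exactly $C$. Pairing up the $p$ vertices of the smaller side, I assign two of them to each sphere, one on either side of $C$, and draw the corresponding copy of $K_{2,q}$ on that sphere; since $K_{2,q}$ is planar (place the two centres at the poles and route the arcs inside the two hemispheres) this is crossing-free on each sphere, and arcs lying on different spheres can meet only along $C$, i.e.\ only at shared right-side endpoints, so no crossings arise between spheres either. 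This uses $\lceil p/2\rceil$ spheres and hence gives both $\sigma^2_3(K_{p,q})\le\lceil p/2\rceil\le p$.

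Finally, for the equality when $q>p(p-1)$, the construction already yields $\sigma^2_3(K_{p,q})\le\lceil p/2\rceil$, so it remains to show that the edge-counting bound $pq/(2p+2q-4)$ rounds up to exactly $\lceil p/2\rceil$. Writing $pq/(2p+2q-4)=\tfrac{p}{2}\bigl(1+(p-2)/q\bigr)^{-1}$, this quantity is always strictly below $p/2$ but exceeds $\lceil p/2\rceil-1$ once $q$ is large enough, and the clean threshold $q>p(p-1)$ comfortably guarantees this for both parities of $p$. I expect the main obstacle to lie not in this routine arithmetic but in arguing cleanly that the $K_{p,q}$-drawing is genuinely crossing-free: one must verify, using the inversion and intersection facts cited in Proposition~\ref{prop:sigma-le-rho}, that two spheres meeting in the circle $C$ share no point off $C$, and that the planar copies of $K_{2,q}$ can be realized simultaneously with all larger-side vertices pinned to their common cyclic order on $C$.
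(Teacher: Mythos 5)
Your proposal is correct, and it reaches the same bounds as the paper by a partly different, more self-contained route. The paper proves both parts entirely through the sandwich $\theta(G)\le\sigma^2_3(G)\le\lceil\bt(G)/2\rceil$ of Proposition~\ref{prop:thickness} and then cites the literature: Duncan's $\theta(K_n)\ge\lfloor(n+7)/6\rfloor$, Bernhart--Kainen's $\bt(K_n)=\lceil n/2\rceil$ and $\bt(K_{p,q})=p$ for $q>p(p-1)$, and Harary's $\theta(K_{p,q})\ge pq/(2p+2q-4)$. Your upper bound for $K_n$ is exactly the paper's. For the lower bounds you instead re-derive the thickness bounds from scratch by Euler counting on each sphere (stereographic projection, at most $3n-6$ resp.\ $2n-4$ edges per sphere); this is the same argument that underlies the cited results, so it is morally identical but needs no references, and your closed-form check $\bigl\lceil\binom{n}{2}/(3n-6)\bigr\rceil=\lfloor(n+7)/6\rfloor$ does hold for all $n\ge3$. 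For the $K_{p,q}$ upper bound you replace the book-embedding argument by an explicit pencil-of-spheres construction (all spheres sharing one circle that carries the $q$-side, two pole vertices per sphere); this is sound---it is essentially the construction the paper itself uses for $\sigma^1_3(K_{p,q})$ in Theorem~\ref{thm:sigma13K}\ref{thm:sigma13K_Kpq}---and it buys you two things: the sharper bound $\sigma^2_3(K_{p,q})\le\lceil p/2\rceil$ for all $p\le q$ stated explicitly, and the equality case without invoking Bernhart--Kainen's exact book thickness of $K_{p,q}$ at all, leaving only the (correct) arithmetic that $q>p(p-1)$ forces $\lceil pq/(2p+2q-4)\rceil=\lceil p/2\rceil$. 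Two cosmetic slips, neither a gap: an edge could lie on two spheres of the cover, so say ``at least one sphere'' (multiple counting only helps the inequality), and $pq/(2p+2q-4)$ is not \emph{strictly} below $p/2$ when $p\le2$ (it equals $p/2$ at $p=2$), though the ceiling computation you actually need is unaffected.
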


\begin{proof}
  (a) By Proposition~\ref{prop:thickness}, $\theta(K_n)\le
  \sigma^2_3(K_n)\le \lceil \bt(K_n)/2\rceil$.  It remains to note
  that, e.g., Duncan~\cite{DUNCAN201195} showed that $\theta(K_n)\ge
  \lfloor (n+7)/6\rfloor$ and Bernhart and
  Kainen~\cite{BERNHART1979320} showed that $\bt(K_n)=\lceil
  n/2\rceil$.

  (b) Again, it suffices to bound the values of the graph's thickness
  and book thickness.  It can be easily shown that $\bt(K_{p,q})\le
  \min\{p,q\}$.  On the other hand, Harary~\cite[Section~7,
  Theorem~8]{harary} showed that $\theta(K_{p,q}) \ge
  pq/(2p+2q-4)$.  Due to Proposition~\ref{prop:thickness},
  $\theta(K_{p,q}) \le \sigma^2_3(K_{p,q}) \le \min\{p,q\} \le p$.  In
  particular, if $q>p(p-1)$ then $\bt(K_{p,q})=p$, due to Bernhart and
  Kainen~\cite[Theorem.~3.5]{BERNHART1979320}, and $\lceil
  pq/(2p+2q-4)\rceil=\lceil p/2\rceil$, so in this case
  $\sigma^2_3(K_{p,q})=\lceil p/2\rceil$.
\end{proof}

Theorem~\ref{thm:sigma23K} implies that any $n$-vertex graph~$G$
has $\sigma^2_3(G)\le \lceil n/4\rceil$.

\nv{On the other hand, given a graph $G$, we can bound $\sigma^1_3(G)$
  from below in terms of the \emph{bisection width} $\bw(G)$ of~$G$,
  that is, the minimum number of edges between the two sets
  $(W_1,W_2)$ of a \emph{bisection} of~$G$ that is, a partition of the
  vertex set~$V(G)$ of~$G$ into two sets~$W_1$ and~$W_2$ with $|W_1| =
  \lceil n/2 \rceil$ and $|W_2| = \lfloor n/2 \rfloor$.
\begin{proposition}
  \label{thm:rho_bw}
  For any graph $G$ and $d \ge 2$, it holds that $\sigma^1_d(G)\ge
  \bw(G)/2$.
\end{proposition}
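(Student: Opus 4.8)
The plan is to show that any circular-arc drawing of $G$ on $k$ circles yields a bisection of $G$ with at most $2k$ cut edges; applying this to a drawing realizing $k=\sigma^1_d(G)$ then gives $\bw(G)\le 2\sigma^1_d(G)$, which is the claimed bound. To produce the bisection I would use a sweep argument. Fix an optimal drawing on $k$ circles and choose a generic direction $u\in\mathbb{R}^d$ so that the orthogonal projections $\langle v,u\rangle$ of the $n$ vertices are pairwise distinct and so that $u$ is not orthogonal to the plane of any circle of the drawing (there are only finitely many such planes, so almost every $u$ works). Sweeping a threshold $t$ and letting $W_1(t)=\{v:\langle v,u\rangle\le t\}$, the size $|W_1(t)|$ increases from $0$ to $n$ in unit steps, so I can pick $t$ with $|W_1(t)|=\lceil n/2\rceil$ and with $t$ lying strictly between two consecutive projection values. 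The hyperplane $H=\{x:\langle x,u\rangle=t\}$ then passes through no vertex and induces a bisection $(W_1,W_2)$ of $G$.

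It remains to bound the number of edges whose endpoints lie on opposite sides of $H$, and I would do this circle by circle. Since $u$ is not orthogonal to the plane of any circle, $H$ meets each circle $C$ in at most two points, say $P$ and $Q$. These points split $C$ into arcs along each of which the function $\langle\,\cdot\,,u\rangle-t$ has constant sign, so every vertex on $C$ lies on the $W_1$-side or the $W_2$-side according to which piece it occupies. An edge drawn on $C$ is a sub-arc, and its endpoints lie on opposite sides of $H$ exactly when its open arc contains exactly one of $P$ and $Q$. Here I use the crossing-free hypothesis in the form that two edges drawn on the same circle must have disjoint interiors, since otherwise they would overlap along a sub-arc. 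Consequently at most one edge-arc can contain $P$ in its interior and at most one can contain $Q$, so at most two edges on $C$ are cut by $H$ (an arc containing both $P$ and $Q$ crosses $H$ twice and hence is not cut).

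Summing over the $k$ circles, at most $2k$ edges join $W_1$ to $W_2$, hence $\bw(G)\le 2k$; taking $k=\sigma^1_d(G)$ finishes the proof. The step that needs the most care is the per-circle bound: the crux is translating ``crossing-free'' into ``the arcs on a single circle have pairwise disjoint interiors'', which is what forces at most one arc through each of the two intersection points $P$ and $Q$. The genericity choices (distinct projections and $u$ not orthogonal to any circle's plane) are routine but must be stated explicitly, as they are exactly what guarantees that $H$ avoids all vertices and meets every circle in at most two points.
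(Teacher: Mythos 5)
Your proposal is correct and follows essentially the same route as the paper: both arguments take a hyperplane that bisects the vertex set and observe that, since a hyperplane meets each circle in at most two points and the edge arcs on a circle have disjoint interiors, at most $2\sigma^1_d(G)$ edges can be cut, giving $\bw(G)\le 2\sigma^1_d(G)$. The only difference is one of detail: the paper simply invokes the existence of a bisecting hyperplane (citing the analogous Theorem~9(a) of Chaplick et al.), whereas you construct it explicitly by a generic sweep, which neatly handles the degenerate cases (hyperplane through a vertex, or containing a circle) that the paper leaves implicit.
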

\begin{proof}
  The proof is similar to the proof in Theorem~9(a)
  in~\cite{cflrvw-dgflf-GD16}.  It is based on the fact that for any
  finite set of points in $\mathbb{R}^d$, there is a hyperplane that
  bisects the point set into two almost equal subsets (that is, one
  subset may have at most one point more than the other).  Given a
  drawing of~$G$ with $\sigma$ arcs, a hyperplane bisecting~$V(G)$ can
  cross at most $2\sigma$ edges since a hyperplane can cross an arc at
  most twice.
\end{proof}
}

Next we analyze the bisection width of the complete (bipartite) graphs.

\begin{proposition}
  \label{lem:bwComplete}
  For any $n$, $p$, and $q$, $\bw(K_n)=\lfloor n^2/4 \rfloor$ and
  $\bw(K_{p,q})=\lceil pq/2\rceil$.
\end{proposition}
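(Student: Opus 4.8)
The plan is to treat the two graphs separately, since the complete case is immediate while the bipartite case needs a short constrained optimization.

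For $K_n$, I would observe that every bisection $(W_1,W_2)$ splits the vertex set into parts of sizes $\lceil n/2\rceil$ and $\lfloor n/2\rfloor$, and since every pair of vertices is adjacent, the number of crossing edges is exactly $\lceil n/2\rceil\cdot\lfloor n/2\rfloor$, independently of the chosen bisection. As $\lceil n/2\rceil\lfloor n/2\rfloor=\lfloor n^2/4\rfloor$, this common value is simultaneously the minimum and the maximum, so $\bw(K_n)=\lfloor n^2/4\rfloor$.

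For $K_{p,q}$, let $A$ and $B$ be the two color classes with $|A|=p$, $|B|=q$, and consider a bisection $(W_1,W_2)$. Writing $a=|A\cap W_1|$ and $b=|B\cap W_1|$, we have $a+b=\lceil n/2\rceil$ with $n=p+q$, and $0\le a\le p$, $0\le b\le q$. The edges crossing the bisection join $A\cap W_1$ to $B\cap W_2$ or $A\cap W_2$ to $B\cap W_1$, so their number is $C=a(q-b)+(p-a)b$. The key algebraic step I would use is the factorization
\[ 2C = pq - (p-2a)(q-2b), \]
which turns the problem into maximizing $(p-2a)(q-2b)$ subject to the bisection constraint. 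Setting $x=p-2a$ and $y=q-2b$, the constraint becomes $x+y=n-2\lceil n/2\rceil$, namely $0$ if $n$ is even and $-1$ if $n$ is odd, while $x\equiv p$ and $y\equiv q\pmod 2$. If $n$ is even then $y=-x$, so $(p-2a)(q-2b)=-x^2\le 0$; this is largest at $x=0$ when $p,q$ are even (giving $C=pq/2$) and at $|x|=1$ when $p,q$ are odd (giving $C=(pq+1)/2$), so in both cases $C=\lceil pq/2\rceil$. If $n$ is odd then $y=-1-x$, so $(p-2a)(q-2b)=-x(x+1)\le 0$, which is largest at $x\in\{0,-1\}$ and yields $C=pq/2$; since an odd $n$ forces exactly one of $p,q$ to be even, $pq$ is even and $C=pq/2=\lceil pq/2\rceil$. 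In each subcase a short check shows that the optimizing $x$ corresponds to $a=(p-x)/2$ and $b=(q-y)/2$ lying in the feasible ranges, so the value $\lceil pq/2\rceil$ is attained by an actual bisection, and hence $\bw(K_{p,q})=\lceil pq/2\rceil$.

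I expect the only genuine obstacle to be the parity bookkeeping in the $K_{p,q}$ case: one must split on the parity of $n$ and of the color classes to locate the integer minimizer and confirm feasibility. The factorization $2C=pq-(p-2a)(q-2b)$ is what makes this manageable, replacing an awkward constrained integer quadratic minimization by the transparent task of pushing a product of two integers of prescribed parity as close to zero as the linear constraint permits.
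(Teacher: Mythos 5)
Your proof is correct and takes essentially the same approach as the paper: the $K_n$ case is immediate because every bisection of a complete graph has the same width $\lceil n/2\rceil\lfloor n/2\rfloor=\lfloor n^2/4\rfloor$, and for $K_{p,q}$ you use the identical parametrization by the sizes $(a,b)$ of the intersections of one side with the two color classes, yielding the same quadratic cut function $a(q-b)+(p-a)b$. The only difference is that the paper dismisses the minimization as a ``routine calculation of the minimum of a quadratic polynomial on the grid,'' whereas you actually carry it out, and your factorization $2C = pq - (p-2a)(q-2b)$ together with the parity bookkeeping is a clean way to execute that routine step.
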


\begin{proof}
  Let $(W,W')$ be a bisection of $K_n$ such that $|W|=\lfloor
  n/2\rfloor$.   Then the width
  of this bisection is $\lfloor n^2/4 \rfloor$.  

  \nv{Now let $P \cup Q=V(K_{p,q})$ be the bipartition of~$K_{p,q}$,
    and let $(W,W')$ be a bisection of~$K_{p,q}$ that contains~$r$
    vertices from~$P$ and~$s$ vertices from~$Q$}
    (with $r+s=\lfloor(p+q)/2\rfloor$).  Then
  the width of this bisection is $r(q-s)+s(p-r)$. The minimum of this
  value can be found by a routine calculation of the minimum of a
   quadratic polynomial on the grid over the
  possible values of~$r$ and~$s$.
\end{proof}

\begin{theorem}
  \label{thm:sigma13K}
  For any positive integers $n$, $p$, and $q$, it holds that
  \begin{enumerate}[label=(\alph*),itemsep=0pt,topsep=1ex]
  \item $\lfloor n^2/8\rfloor \le \sigma^1_3(K_n) \le (n^2 + 5n +
    6)/6$ \label{thm:sigma13K_Kn} and
  \item $\lceil pq/4\rceil \le \sigma^1_3(K_{p,q}) \le \lceil
    p/2\rceil \lceil q/2\rceil$.\label{thm:sigma13K_Kpq}
  \end{enumerate}
\end{theorem}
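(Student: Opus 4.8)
The two lower bounds come directly from the bisection-width machinery already in place. By Proposition~\ref{thm:rho_bw} (with $d=3$) together with Proposition~\ref{lem:bwComplete}, we get $\sigma^1_3(K_n)\ge \bw(K_n)/2=\lfloor n^2/4\rfloor/2$; since the left-hand side is an integer, it is at least $\lceil\lfloor n^2/4\rfloor/2\rceil\ge\lfloor n^2/8\rfloor$ (check the two parities of $n$ separately). Likewise $\sigma^1_3(K_{p,q})\ge\bw(K_{p,q})/2=\lceil pq/2\rceil/2$, and taking the ceiling and using the identity $\lceil\lceil x/2\rceil/2\rceil=\lceil x/4\rceil$ for integer $x$ yields $\sigma^1_3(K_{p,q})\ge\lceil pq/4\rceil$. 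So it remains to construct drawings meeting the stated upper bounds.

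For the upper bound in~\ref{thm:sigma13K_Kn} the plan is to exploit that a single circle carrying exactly three vertices hosts a whole triangle, since its three arcs partition the circle; hence a triangle decomposition of~$K_n$ costs one circle per triangle. I would first pass to the smallest $m\ge n$ with $m\equiv 1,3\pmod 6$; a residue check gives $m\le n+3$. A Steiner triple system of $K_m$ partitions $E(K_m)$ into $m(m-1)/6$ triples. I place the $n$ real vertices in general position in~$\R^3$ and, for each triple containing at least one edge of $K_n$, draw the real edges of that triple (all three, or just one) as arcs on a single circle through the corresponding real vertices --- the circumcircle when the triple is entirely real. This uses at most $m(m-1)/6\le(n+3)(n+2)/6=(n^2+5n+6)/6$ circles. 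Crossing-freeness follows from general position: three points lie on a unique circle and two triples share at most one vertex, so any two chosen circles pass through at most one common vertex; after perturbing the vertices we may assume that no four are concyclic and that no two of the chosen circles meet except at a shared vertex (each such coincidence is a condition of codimension at least~$1$, and two circles through a common point generically meet only there). The circles carrying single edges are then fixed one at a time so as to avoid the finitely many previously drawn circles away from their endpoints.

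For~\ref{thm:sigma13K_Kpq} I would use that a circle carrying four vertices, two from each side in alternating cyclic order, hosts exactly the four edges of a $K_{2,2}$ (the $4$-cycle uses all four arcs). Splitting the part of size $p$ into $\lceil p/2\rceil$ pairs and the part of size $q$ into $\lceil q/2\rceil$ pairs, every choice of one pair from each side spans a $K_{2,2}$ (or a path/edge when a part is odd), and these $\lceil p/2\rceil\lceil q/2\rceil$ blocks partition $E(K_{p,q})$; one circle per block gives the claimed count. The real work is the geometric realization: the four vertices of a block must be concyclic, and distinct block circles may only meet at shared vertices. To obtain concyclicity for free I would place all vertices on a common sphere and realize each pair as a chord so that every $P$-chord and every $Q$-chord are coplanar (for instance by taking the supporting lines of the chords from the two ruling families of a hyperboloid of one sheet, so that any two such lines meet). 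Then each block's four points are coplanar and hence lie on the circle cut from the sphere by their plane; choosing the common meeting point of the two lines inside the sphere makes the two chords cross, giving the alternating order needed for the $4$-cycle. Two blocks sharing a pair meet exactly in that pair, since two circles on a sphere meet in at most two points, whereas two vertex-disjoint blocks meet if and only if the line common to their two planes crosses the sphere.

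The main obstacle is precisely this last point in~\ref{thm:sigma13K_Kpq}: ensuring simultaneously, for every pair of vertex-disjoint blocks, that the common line of their supporting planes misses the sphere, so that the two block circles are disjoint. I expect to control this by choosing the hyperboloid, the sphere, and the positions of the chords in general position within the family of pairwise-coplanar chords, arguing that the finitely many ``diagonal'' block pairs can be separated; that such a realization exists is already forced for $K_{4,4}$, where the lower and upper bounds coincide at~$4$. For~\ref{thm:sigma13K_Kn} the analogous crossing-freeness is comparatively easy, since the triangle circles are essentially unconstrained and a single general-position argument suffices.
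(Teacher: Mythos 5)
Your lower bounds and your part~(a) are correct and essentially identical to the paper's own proof: the lower bounds come from Propositions~\ref{thm:rho_bw} and~\ref{lem:bwComplete} exactly as you argue, and for the upper bound in~\ref{thm:sigma13K_Kn} the paper likewise partitions $K_n$ into at most $(n^2+5n+6)/6$ subgraphs of $K_3$ via Steiner triple systems and places the $n$ vertices in $\R^3$ in general position (no four coplanar, any two of the chosen circles or lines meeting only in shared vertices), so your triple-system-plus-perturbation argument is the same construction in slightly different clothing.

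Part~\ref{thm:sigma13K_Kpq}, however, contains a structural flaw that no choice of hyperboloid, sphere, or ``general position'' can repair. Write $a_1,\dots,a_{p'}$ for the lines supporting your $P$-chords and $b_1,\dots,b_{q'}$ for those supporting your $Q$-chords. Your alternation requirement forces \emph{every} meeting point $x_{ij}=a_i\cap b_j$ to lie strictly inside the sphere $S$, because the block $(a_i,b_j)$ is itself one of your $K_{2,2}$'s and its two chords must cross. Now take two vertex-disjoint blocks $(a_1,b_1)$ and $(a_2,b_2)$, with planes $\Pi_1\supset a_1\cup b_1$ and $\Pi_2\supset a_2\cup b_2$. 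The point $x_{12}$ lies on $a_1\subset\Pi_1$ and on $b_2\subset\Pi_2$, hence on $\Pi_1\cap\Pi_2$; similarly $x_{21}\in\Pi_1\cap\Pi_2$. So the line $\ell=\Pi_1\cap\Pi_2$ passes through two points strictly inside $S$, hence crosses $S$ in two points, and these two points---which generically lie on none of $a_1,b_1,a_2,b_2$ and are not vertices---are common to both block circles $\Pi_1\cap S$ and $\Pi_2\cap S$. Since each block circle is entirely covered by its four drawn arcs, this is a genuine edge crossing, forced purely by incidence for every ``diagonal'' pair of blocks; it is not a codimension-one accident that genericity removes. (Your appeal to $K_{4,4}$ only shows that \emph{some} $4$-circle drawing exists, not one of your form.) The paper's construction escapes precisely by being degenerate where you insist on genericity: it takes a pencil of $p'$ spheres through a common circle, places the $Q$-pairs as antipodal pairs on that circle and \emph{all} $P$-pairs on the common axis, one pair per sphere. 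Then all chords are concurrent at the center (so alternation holds), and for any two blocks the intersection line of their planes is the axis itself, which meets each block circle only in that block's own $P$-vertices---so distinct circles meet only in shared vertices.
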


\begin{proof}
  The lower bounds follow from Proposition~\ref{thm:rho_bw}
  and~\ref{lem:bwComplete}.

  To show the upper bound for $\sigma^1_3(K_n)$, we use a partition
  of~$K_n$ into (mutually edge-disjoint) subgraphs of~$K_3$ (that is,
  copies of $K_3$, paths of length~2, and single edges).  Using
  Steiner triple systems, one can show that $(n^2 + 5n + 6)/6$
  subgraphs suffice~\cite[Theorem~12]{cflrvw-dgflf-GD16}.
  \nv{For distinct
    points~$a$, $b$, and~$c$, let $L(a,b)$ be the line through~$a$
    and~$b$ and let $C(a,b,c)$ be the (unique) circle through~$a$,
    $b$, and~$c$.  For $n \le 3$, it is clear how to draw~$K_n$.}  
  For $n \ge 4$, we iteratively construct a set~$P$
  of $n$ points in~$\mathbb{R}^3$ satisfying the following conditions:
  \begin{itemize}
  \item no four distinct points of $P$ are coplanar,
  \item for any five distinct points $p_1,\dots,p_5\in P$, it holds
    that $C(p_1,p_2,p_5) \cap C(p_3,p_4,p_5) = \{p_5\}$ and
    $L(p_1,p_2) \cap C(p_3,p_4,p_5) = \emptyset$. 
  \item for any six distinct points $p_1,\dots,p_6\in P$, it holds
    that $C(p_1,p_2,p_3) \cap C(p_4,p_5,p_6) = \emptyset$.
  \end{itemize}
  It can be checked that these conditions forbid only a so-called
  \emph{nowhere dense set}
   of~$\mathbb{R}^3$ to place the next point of~$P$, so we can always
  continue.  Finally, we map the vertices of~$K_n$ to the distinct
  points of the set~$P$.  \nv{Consider our partition of $K_n$ into
  subgraphs of~$K_3$.  Each subgraph of~$K_3$ with at least two edges
  uniquely determines a circle or a circular arc, which we draw.  For
  each subgraph that consists of a single edge, we draw the line
  segment that connects the two vertices.}  The above conditions ensure
  that the drawings of no two subgraphs have a crossing.

  The upper bound for $\sigma^1_3(K_{p,q})$ can be seen as follows.
  Let $p'=\lceil p/2\rceil\ge p/2$ and $q'=\lceil q/2\rceil\ge
  q/2$. Draw a bipartite graph $K_{2p',2q'}\supset K_{p,q}$
  in~3D as follows; see Fig.~\ref{fig:sph_cov_K_pq}.  Let
  $V(K_{2p',2q'})=P\cup Q$ be the natural bipartition of its vertices.
  Fix any family of $p'$ distinct spheres with a common intersection
  circle. Place the $2q'$ vertices of~$Q$ on $q'$ distinct pairs of
  antipodal points on the circle. Consider a line going
  through the center of the circle and orthogonal to its plane. Place
  the $2p'$ vertices of~$P$ into $p'$ pairs of distinct intersection
  points of the line with the circles of the family, the points from
  each pair belonging to the same sphere.  Now each pair of antipodal
  points in~$Q$ together with each pair of cospheric points in~$P$
  determine a unique circle that contains all these points and
  provides a drawing of the four edges between them.  The union of all
  these circles is the desired drawing of~$K_{2p',2q'}$ onto $p'q'$
  circles.
\end{proof}

We remark that Proposition~\ref{prop:thickness} and all the bounds for
3D in this section and also hold for higher dimensions.

\begin{figure}
  \hfill
  \begin{subfigure}[b]{.65\textwidth}
    \centering
    \includegraphics{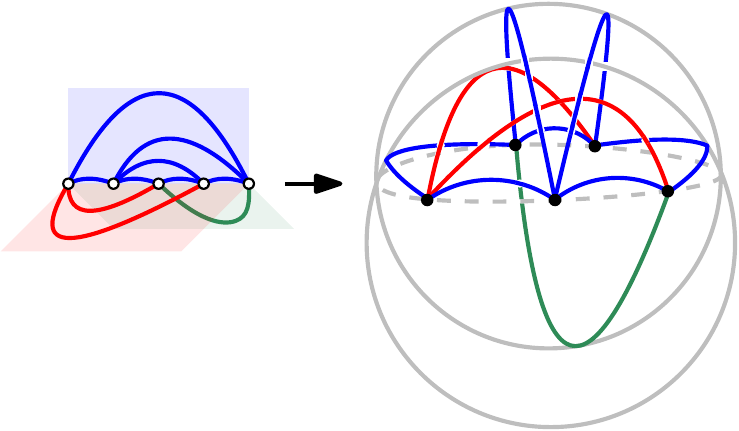}
    \caption{$\sigma_3^2(K_5) \le \lceil \theta(K_5)/2 \rceil = 2$}
    \label{fig:sph_cov_K_n}
  \end{subfigure}
  \hfill
  \begin{subfigure}[b]{.3\textwidth}
    \centering
    \includegraphics{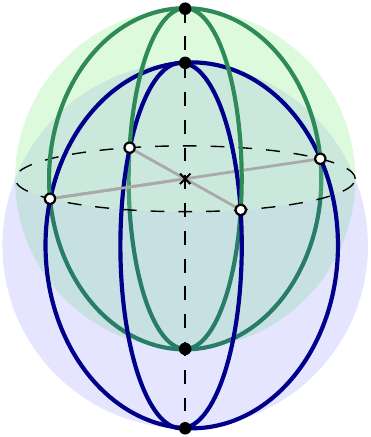}
    \caption{$\sigma_3^2(K_{4,4}) \le 2$}
    \label{fig:sph_cov_K_pq}
  \end{subfigure}
  \hfill\hspace*{0ex}

  \caption{Upper bounds for the spherical cover number of complete
    (bipartite) graphs.}
  \label{fig:spherical-cover}
\end{figure}

\begin{table}[htb]
  \centering
  $
  \begin{array}{lr@{\,\sim\,}lr@{\,\sim\,}lc}
    \toprule
    \phantom{\rho^1_2(}G  & \multicolumn{2}{c}{ K_n}  &  \multicolumn{2}{c}{ K_{p,q}}
 & \text{references}\\
\midrule

    \rho_3^1(G)   &  \multicolumn{2}{c}{ \binom{n}{2}}  &
    \multicolumn{2}{c}{ pq - \left\lfloor \frac{p}{2} \right\rfloor
    - \left\lfloor \frac{q}{2} \right\rfloor}
    & \text{\cite[Expl.~10 \&~25(c)]{cflrvw-dgflf-GD16}} \\[2ex]

    \rho_3^2(G) &  \frac{n^2-n}{12} &
    \frac{n^2 + 5n + 6}{6}
     &\multicolumn{2}{c}{  \left\lceil \frac{\min\{p,q\}}{2} \right\rceil} &
     \text{\cite[Thm.~12, Expl.~11]{cflrvw-dgflf-GD16}} \\

    \sigma_3^1(G)   & \left\lfloor \frac{n^2}{8} \right\rfloor &
    \frac{n^2 + 5n + 6}{6} & \left\lceil\frac{pq}4\right\rceil &
    \left\lceil\frac{p}{2}\right\rceil
    \left\lceil\frac{q}{2}\right\rceil
    & \text{Theorem~\ref{thm:sigma13K}}\\[2ex]

     \sigma_3^2(G) & \left\lfloor \frac{n+7}{6} \right\rfloor
    &  \left\lceil \frac{n}{4} \right\rceil &
    \left\lceil \frac{pq}{2(p+q-2)} \right\rceil &
    \left\lceil \frac{\min\{p,q\}}{2} \right\rceil
    &  \text{Theorem~\ref{thm:sigma23K}}\\

    \bottomrule
  \end{array}$

  \caption{Lower and upper bounds on the three-dimensional line,
    plane, circle, and sphere cover numbers of $K_n$ for any $n \ge 1$
    and of $K_{p,q}$ for any $p, q\ge 3$. \nv{The cells with only one
    entry contain tight bounds.}}
  \label{tab:affine_cover_K}
\end{table}

Table~\ref{tab:affine_cover_K} summarizes the known bounds for the
affine cover numbers~\cite{cflrvw-dgflf-GD16} and the new bounds for
the spherical cover numbers of complete (bipartite) graphs in~3D.

\begin{comment}
  On the other hand, $\sigma^2_3(K_{p,q})\ge \theta(K_{p, q}) =
  \lceil pq/(2p+2q-4)\rceil$, except possibly when $p$ and $q$ are
  both odd, $p\le q$, and there exists an integer $k$ such that $q =
  \left\lfloor\frac{2k(p-2)}{p-2k}\right\rfloor$; see \cite{harary}.
  If $q>p(p-1)$ then we can easily check that there is no such $k$;
  moreover, $\bt(K_{p,q})=p$ and $\lceil pq/(2p+2q-4)\rceil=\lceil
  p/2\rceil$, so in this case $\sigma^2_3(K_{p,q})=\lceil p/2\rceil$.
\end{comment}

\section{Platonic Graphs}
\label{sec:platonic}

In this section we analyze the segment numbers, arc numbers, affine
cover numbers, and spherical cover numbers of platonic graphs.
\nv{We
provide upper bounds via the corresponding drawings; see
Figures~\ref{fig:tetrahedron}--\ref{fig:icosahedron}.}

\nv{To bound the spherical cover numbers~$\sigma_2^1$ and~$\sigma_3^1$
of the platonic graphs from below,} we use a single combinatorial
argument---Proposition~\ref{prop:combinatorial_argument}\ref{thm:rho_mod_1};
see Section~\ref{sec:lower}.  For the
affine cover number~$\rho_2^1$, a similar combinatorial argument
fails~\cite[Lemma~7(a)]{cflrvw-dgflf-GD16}.  Therefore, we
\nv{bound~$\rho_3^1$ (and, hence, also~$\rho_2^1$) from below} for
each platonic graph individually; see Proposition~\ref{prop:lines}.
For an overview of
our results, see Tables~\ref{tab:seg_arc} and~\ref{tab:affine_cover}.
We abbreviate every platonic graph by its capitalized initial; for
example, $C$ for the cube.

\begin{table}
\centering

\begin{tabular}{l@{\hspace{2ex}}rrr@{\hspace{3ex}}r@{\hspace{2ex}}c@{
\hspace{2ex}}r@{\hspace{2ex}}c@{\hspace{2ex}}c}
\toprule
$G=(V,E)$    & $|V|$ & $|E|$ & $|F|$ & $\myseg$&
  upp.$\,$bd. & $\myarc$ & lower bd. & upp.$\,$bd. \\
\midrule
tetrahedron  &  4 &  6 &  4 &  6 & Fig.~\ref{fig:tet_seg} &  3 &
Prop.~\ref{prop:combinatorial_argument}\ref{thm:rho_mod_1} &
\nv{Prop.~\ref{prop:arcs}\ref{prop:arcs_T}} \\
octahedron   &  6 & 12 &  8 &  9 & Fig.~\ref{fig:oct_seg} &  3 &
Prop.~\ref{prop:combinatorial_argument}\ref{thm:rho_mod_1} &
\nv{Prop.~\ref{prop:arcs}\ref{prop:arcs_O}} \\
cube         &  8 & 12 &  6 &  7 & Fig.~\ref{fig:cube_seg}&  4
& \cite[Lem.$\,$5]{Dujmovic2007DrawingsOP}      &
\nv{Prop.~\ref{prop:arcs}\ref{prop:arcs_C}} \\
dodecahedron & 20 & 30 & 12 & 13 & Fig.~\ref{fig:dod_seg} & 10
& \cite[Lem.$\,$5]{Dujmovic2007DrawingsOP}  &
\nv{Prop.~\ref{prop:arcs}\ref{prop:arcs_D}} \\
icosahedron  & 12 & 30 & 20 & 15 & Fig.~\ref{fig:ico_seg_opt}
& $7$& Prop.~\ref{prop:combinatorial_argument}\ref{thm:rho_mod_1} &
Prop.~\ref{prop:arcs}\ref{prop:ico_arcs} \\
\bottomrule
\end{tabular}

\caption{Bounds on the segment and arc numbers of the platonic graphs.
  We obtained the lower
  bounds on the segment number with the help of an integer linear
  program; see Table~\ref{tab:piangles} in Section~\ref{sec:ilp}.
  The upper bounds for the segment numbers of the dodecahedron and
  icosahedron have been established by Schulz~\cite{s-dgfa-JGAA15} and
  Scherm~\cite[Fig.~2.1(c)]{s-mukkgg-BTh16}.}
\label{tab:seg_arc}

\bigskip

%\begin{table}[tb]
%\centering
\begin{tabular}{l@{\hspace{2ex}}r@{\hspace{2ex}}rl@{\hspace{1.5ex}}cr@
{\hspace{2ex}}r@{\hspace{2ex}}c}
\toprule
graph & $\rho_2^1$& $\rho_3^1$ & \multicolumn{1}{c}{lower~bd.~~} &
upp.$\,$bd.\ & $\sigma_2^1$ & $\sigma_3^1$ & upp.$\,$bd.\ \\
\midrule
tetrahedron & 6 & 6 & \cite[Expl.$\,$10]{cflrvw-dgflf-GD16}
& Fig.~\ref{fig:tet_seg} &  3 & 3 & Fig.~\ref{fig:tet_arc} \\

octahedron  & 9 & 9 & Prop.~\ref{prop:lines}\ref{prop:lines_O}
& Fig.~\ref{fig:oct_seg}  & 3 & 3 & Fig.~\ref{fig:oct_arc} \\

cube        & 7 & 7 & Prop.~\ref{prop:lines}\ref{prop:lines_C}
& Fig.~\ref{fig:cube_seg} & 4 & 4 & Fig.~\ref{fig:cube_arc} \\

dodecahedron & $9\dots10$ & $9\dots10$
& Prop.~\ref{prop:lines}\ref{prop:lines_D}
& Fig.~\ref{fig:dod_seg} & 5 & 5 & Fig.~\ref{fig:dod_circle_arc} \\

icosahedron  & $13\dots15$ & $13\dots15$
& Prop.~\ref{prop:lines}\ref{prop:lines_I}
& Fig.~\ref{fig:ico_seg_opt} & 7 & 7 & Fig.~\ref{fig:ico_circle_cover}
\\
\bottomrule
\end{tabular}

\caption{Bounds on the affine cover numbers $\rho_d^l$ and the
  spherical cover numbers $\sigma_d^l$ for platonic graphs.  The lower
  bounds on~$\sigma^1_2$ and~$\sigma^1_3$ stem from
  Proposition~\ref{prop:combinatorial_argument}\ref{thm:rho_mod_1}.}
\label{tab:affine_cover}
\end{table}

\begin{figure}[p]
  \begin{minipage}[b]{.265\textwidth}
    \begin{subfigure}[b]{.48\textwidth}
      \centering
      \includegraphics[page=2]{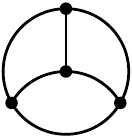}
      \caption{6 segm.}
      \label{fig:tet_seg}
    \end{subfigure}
    \hfill
    \begin{subfigure}[b]{.48\textwidth}
      \centering
      \includegraphics[page=1]{tetrahedron}
      \caption{3 arcs}
      \label{fig:tet_arc}
    \end{subfigure}
    \caption{Drawings of the tetrahedron}
    \label{fig:tetrahedron}
  \end{minipage}
  \hfill
  \begin{minipage}[b]{.7\textwidth}
    \begin{subfigure}[b]{0.33\textwidth}
      \centering
      \includegraphics[scale=.95]{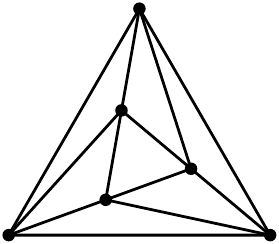}
      \caption{9$\,$segm.$\,$/$\,$lines}
      \label{fig:oct_seg}
    \end{subfigure}
    \hspace*{-1.5ex}
    \begin{subfigure}[b]{0.33\textwidth}
      \centering
      \includegraphics[scale=.95]{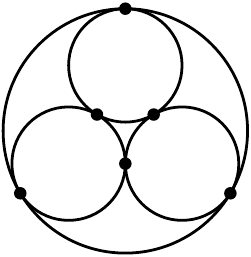}
      \caption{4 arcs$\,$/$\,$circles}
      \label{fig:oct_arc_orig}
    \end{subfigure}
    \begin{subfigure}[b]{0.3\textwidth}
      \centering
      \includegraphics[scale=.95]{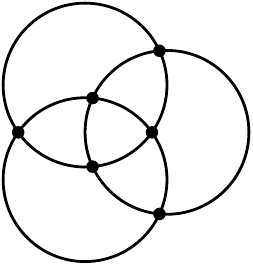}
      \caption{3 arcs$\,$/$\,$circles}
      \label{fig:oct_arc}
    \end{subfigure}
    \caption{Drawings of the octahedron}
    \label{fig:octahedron}
  \end{minipage}

\bigskip
\bigskip

  \begin{minipage}{\textwidth}
    \begin{subfigure}[b]{0.27\textwidth}
      \centering
      \includegraphics[scale=.95]{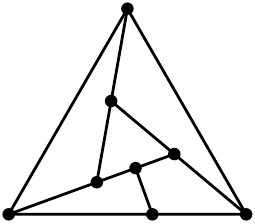}
      \caption{7 segm.$\,$/$\,$lines}
      \label{fig:cube_seg}
    \end{subfigure}
    \hfill
    \begin{subfigure}[b]{0.22\textwidth}
      \centering
      \includegraphics{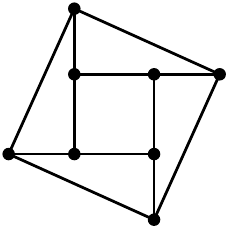}
      \caption{8 segm.$\,$/$\,$lines}
      \label{fig:cube_seg_orig}
    \end{subfigure}
    \hfill
    \begin{subfigure}[b]{0.26\textwidth}
      \centering
      \includegraphics[scale=.95]{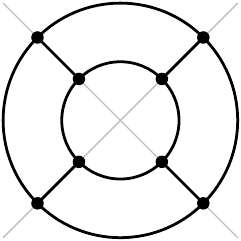}
      \caption{6$\,$arcs$\,$/$\,$4$\,$circles}
      \label{fig:cube_arc_orig}
    \end{subfigure}
    \hfill
    \begin{subfigure}[b]{0.22\textwidth}
      \centering
      \includegraphics[scale=.95]{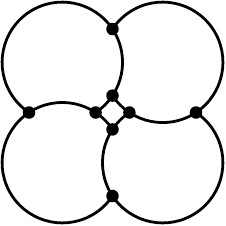}
      \caption{4 arcs}
      \label{fig:cube_arc}
    \end{subfigure}
    \caption{Drawings of the cube}
    \label{fig:cube}
  \end{minipage}

\bigskip
\medskip

  \begin{minipage}{\textwidth}
    \begin{subfigure}[b]{0.27\textwidth}
      \includegraphics[scale=.95]{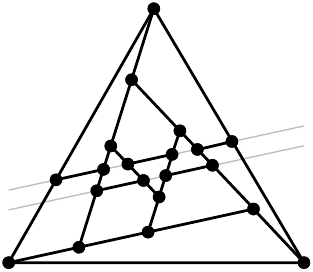}

\caption{13$\,$segm.$\,$/$\,$10$\,$lines$\,$\cite{s-mukkgg-BTh16}}
      \label{fig:dod_seg}
    \end{subfigure}~%
    \begin{subfigure}[b]{0.25\textwidth}
      \includegraphics[scale=.95]{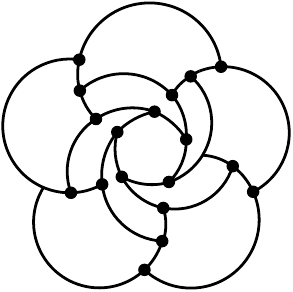}
      \caption{10$\,$arcs$\,$/$\,$10$\,$circ.$\,$\cite{s-dgfa-JGAA15}}
      \label{fig:dod_schulz}
    \end{subfigure}~%
    \begin{subfigure}[b]{0.25\textwidth}
      \includegraphics[scale=.95]{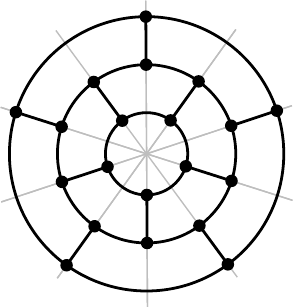}
      \caption{13$\,$arcs$\,$/$\,$8$\,$circ.$\,$\cite{s-mukkgg-BTh16}}
      \label{fig:dod_scherm}
    \end{subfigure}~%
    \begin{subfigure}[b]{0.22\textwidth}
      \includegraphics[scale=.95]{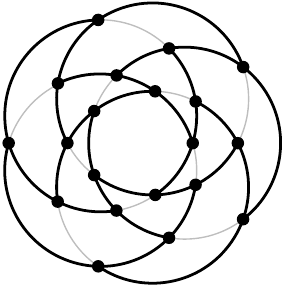}
      \caption{\mbox{10$\,$arcs$\,$/$\,$5$\,$circles}}
      \label{fig:dod_circle_arc}
    \end{subfigure}
    \caption{Drawings of the dodecahedron}
    \label{fig:dodecahedron}
  \end{minipage}

\smallskip

  \begin{minipage}{\textwidth}
    \begin{subfigure}[b]{0.3\textwidth}
      \centering
      \includegraphics{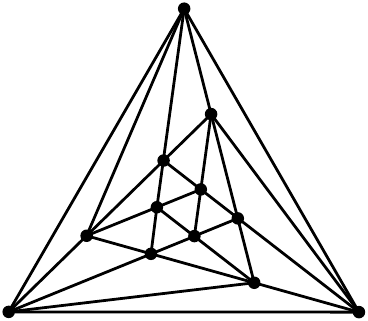}
      \caption{15 segments / lines}
      \label{fig:ico_seg_opt}
    \end{subfigure}
    \hfill
     \begin{subfigure}[b]{0.3\textwidth}
      \centering
      \includegraphics{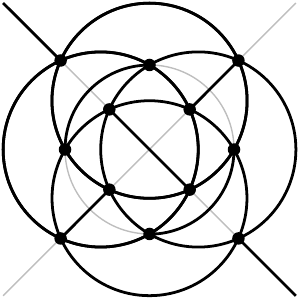}
      \caption{10 arcs / 7 circles}
      \label{fig:ico_arc}
    \end{subfigure}
    \hfill
    \begin{subfigure}[b]{0.3\textwidth}
      \centering
      \includegraphics{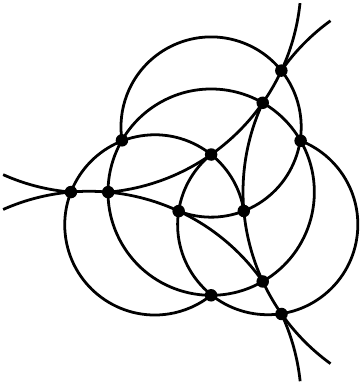}
      \caption{7 arcs / 7 circles}
      \label{fig:ico_circle_cover}
    \end{subfigure}
    \caption{Drawings of the icosahedron}
    \label{fig:icosahedron}
  \end{minipage}
\end{figure}

\begin{proposition}
\label{prop:lines}
\begin{enumerate*}[label=(\alph*)]
\item $\rho^1_3(T) \ge 6$;\label{prop:lines_T}
\item $\rho^1_3(O) \ge 9$;\label{prop:lines_O}
\item $\rho^1_3(C) \ge 7$;\label{prop:lines_C}
\item $\rho^1_3(D) \ge 9$;\label{prop:lines_D}
\item $\rho^1_3(I) \ge 13$.\label{prop:lines_I}
\end{enumerate*}
\end{proposition}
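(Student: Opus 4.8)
The plan is to reason directly about line covers of crossing-free straight-line drawings in $\mathbb{R}^3$, and first I would record the elementary facts that everything rests on. Suppose a drawing uses $\ell$ lines. If a line $\lambda$ contains $t$ vertices of the drawing, then in the order induced along $\lambda$ only consecutive vertices may be adjacent in $G$: an edge between two non-consecutive collinear vertices would pass through an intermediate vertex and violate crossing-freeness. Hence the vertices on $\lambda$ induce a subgraph of a path, $\lambda$ carries at most $t-1$ edges, and in particular at most two edges incident to any single vertex. Consequently each vertex $v$ lies on at least $\lceil \deg v/2\rceil$ cover lines, and since two distinct lines meet in at most one point, counting pairs of cover lines through common vertices yields $\sum_{v}\binom{r_v}{2}\le\binom{\ell}{2}$, where $r_v$ is the number of cover lines through~$v$. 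Writing $f_v$ for the number of these lines carrying two edges incident to $v$ (the ``flat'' pairs at $v$), we have $r_v \ge \deg v - f_v$, and the total number of saved lines is $m-\ell=\sum_\lambda(e_\lambda-1)$, where $e_\lambda$ is the number of edges on $\lambda$. Bounding this saving is the whole game, since a lower bound on $\rho^1_3$ also bounds $\rho^1_2\ge\rho^1_3$ from below.

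Next I would dispose of the tetrahedron. As $T=K_4$, no two of its edges can be collinear: two incident edges being collinear would force the third edge of their triangle through a vertex, and two disjoint edges being collinear would place four vertices on a line and force the remaining four edges of $K_4$ to overlap. So every edge needs a private line and $\rho^1_3(T)=6$ (this is also Example~10 of Chaplick et al.). I would then observe that the plain incidence inequality $\sum_v\binom{r_v}{2}\le\binom{\ell}{2}$ is by itself too weak for the remaining graphs, giving only the values $4,5,7,9$ for $O,C,D,I$; this is exactly the failure of the combinatorial argument that the text alludes to, so the four graphs must be treated individually.

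The octahedron is the model case and can be handled cleanly because it is triangle-rich. A collinear triple $u\text{--}v\text{--}w$ forces $u,w$ to be non-adjacent; in $O=K_{2,2,2}$ the non-edges are exactly the three antipodal pairs, so $\{u,w\}$ must be one of them. No four vertices can be collinear (a vertex cannot be the common antipode of two others), hence every line carries at most two edges, and a saving occurs only at a collinear triple. Since any triple with endpoints $\{a,a'\}$ lies on the single line through $a$ and $a'$, a second middle vertex would put four points on that line; thus each antipodal pair supports at most one collinear triple, giving at most three savings and $\rho^1_3(O)\ge 9$. For the icosahedron I would run the analogous programme at the level of the refined incidence count $\binom{\ell}{2}\ge\sum_v\binom{\deg v-f_v}{2}$ (degree $5$, so $f_v\in\{0,1,2\}$), classifying the few short induced paths that can go collinear and bounding how many can coexist, so as to reach $13$.

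The hard part, which I expect to be the main obstacle, is the cube and the dodecahedron: being triangle-free of girth $4$ and $5$, they admit long induced paths, so I cannot cap the edges per line at two, and a single line may legitimately carry several edges. Here I would push the flatness-refined inequality $\binom{\ell}{2}\ge\sum_v\binom{\deg v-f_v}{2}$ and then prove a sharp upper bound on the total flatness $F=\sum_v f_v$; for the cube, for instance, $F\le 3$ already forces $\ell\ge 7$. Establishing such caps requires a case analysis of which collinear vertex sets are simultaneously realizable without crossings in a $3$-connected planar graph, and the most delicate point is that savings need not come from flat vertices at all: disjoint collinear edges sharing one line (``gaps'') decouple the line count from the segment count, so I would have to argue separately that gaps cannot produce savings beyond the flatness budget. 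Controlling these gaps, rather than the core incidence counting, is where the real work lies.
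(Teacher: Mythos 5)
Parts (a) and (b) of your argument are sound: the tetrahedron argument is correct, and your octahedron proof (no line covers more than three vertices, so each line carries at most two edges, which must form a path with antipodal endpoints, and each of the three antipodal pairs supports at most one such line) is essentially identical to the paper's. The genuine gap is in (c)--(e), which you leave as a programme, and the quantitative lemma on which that programme hinges is false. For the cube you propose to prove an unconditional cap $F=\sum_v f_v\le 3$ on the total flatness; but the cube admits a crossing-free straight-line drawing with seven segments (Fig.~\ref{fig:cube_seg}), hence with $12-7=5$ straight angles at five distinct vertices, i.e., $F=5$. With $F=5$ your refined incidence count yields only $\sum_v\binom{\deg v-f_v}{2}=5\cdot 1+3\cdot 3=14\le\binom{\ell}{2}$, i.e., $\ell\ge 6$, not $7$. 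The same collapse occurs for the other two graphs: the drawings of Figures~\ref{fig:dod_seg} and~\ref{fig:ico_seg_opt} realize $F=30-13=17$ for the dodecahedron and $F=30-15=15$ for the icosahedron, and under such flatness the incidence inequality certifies only $\ell\ge 8$ and $\ell\ge 12$, respectively (for the icosahedron, nine vertices with $f_v=1$ and three with $f_v=2$ give $9\cdot 6+3\cdot 3=63\le\binom{12}{2}=66$, so twelve lines are not excluded). Thus incidence counting combined with any flatness cap that is actually true falls short of the claimed bounds $7$, $9$, and $13$.

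You could retreat to a conditional cap (``a drawing covered by at most six lines has $F\le 3$''), which these drawings do not refute, but proving such a statement is precisely the geometric work you defer, and it cannot be purely combinatorial: your counts are blind to planarity and convexity, which is exactly what rules out the remaining configurations. This is where the paper's proof operates. It splits on whether the drawing lies in a single plane. If not, it counts pairs of intersecting cover lines forced at the (at least four) extreme points of the convex hull --- for the icosahedron, four extreme vertices of degree $5$ already force $20-\binom{4}{2}=14$ lines. If the drawing is planar, it performs a case analysis on the shape of the outer face (convex polygon, non-convex quadrilateral, triangle or pentagon), and for the icosahedron it adds a convex-position argument: assuming twelve lines, the three line-triples $\mathcal{L}_u,\mathcal{L}_v,\mathcal{L}_w$ through the outer vertices can create at most seven triple intersection points, while the nine inner vertices of degree $5$ each need to be one. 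Your outline supplies no substitute for these steps. Finally, your closing worry about ``gaps'' (lines covering disjoint edges) is a red herring for the incidence count --- such lines simply do not contribute to $\sum_v\binom{r_v}{2}$; the real obstruction is that the flatness caps your method needs are not available.
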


%\begin{proof}
\noindent\emph{Proof.}~
  \ref{prop:lines_T} Follows from~\cite[Ex.~10]{cflrvw-dgflf-GD16}.

  \ref{prop:lines_O} Consider a straight-line drawing of the
  octahedron~$O$ covered
  by a family~$\mathcal L$ of $\rho$ lines.
  Observe that every vertex of the octahedron is adjacent to every
  other except the opposite vertex. Therefore, no line in
  $\mathcal{L}$ can cover more than three vertices,
  otherwise the edges on the line would overlap. Hence,
  every line covers at most two edges, and these must be adjacent.
  Moreover, the two end vertices of these length-2 paths cannot be
  adjacent. Since there are only three pairs of such vertices,
  at most three lines cover two edges each.
  Since the octahedron has twelve edges, $\rho \ge 9$.

  \ref{prop:lines_C} Now consider a straight-line drawing of the
  cube~$C$ covered
  by a family~$\mathcal L$ of $\rho$ lines.  We distinguish two cases.

  Assume first that the drawing of the cube lies in a single plane.
  Each embedding of the cube contains two nested
  cycles, namely, the boundary of the outer face and the innermost
  face.  We consider three cases depending on the shape of the outer
  face.
  (i)~\nv{If the outer cycle is drawn as the boundary of a (strictly)
  convex quadrilateral}, then
  none of the lines covering its sides can be used to cover
  the edges of the innermost cycle, therefore, it needs three
  additional lines.  (ii)~\nv{If the outer cycle is drawn
  as the boundary of a non-convex quadrilateral}, then we need
  three additional lines to cover the three edges going from
  its three convex angles to the innermost cycle.
  (iii)~Now assume that the outer cycle is drawn as a triangle. Then
  none of the lines covering its sides can be used to cover
  the edges of the innermost cycle.  If this cycle is drawn as a
  quadrilateral, then
  we need four additional lines to cover its sides.
  If the innermost cycle is drawn as a triangle, then we need three
  lines for the triangle and an additional line
  to cover the edge incident to the vertex of the innermost cycle
  which is not a vertex of the triangle.
  In each of the three cases (i)--(iii), we need at least seven lines
  to cover the cube.

  Now assume that the drawing of the cube is not contained in
  a single plane. Then its convex hull has (at least) four extreme
points.
  In order to cover the cube, we need at least one pair of
  intersecting   lines of~$\mathcal L$ for each vertex of the cube
  and at least three such pairs for each extreme point,
  that is, at least $4+4\cdot 3=16$  pairs of intersecting straight
  lines in total.  So, $\binom{\rho}{2}\ge 16$ and $\rho\ge 7$.

  \ref{prop:lines_D} Consider a straight-line drawing of the
  dodecahedron $D$ covered
  by a family~$\mathcal L$ of $\rho$ lines.  Again, we distinguish two
  cases.

  Assume first that the drawing of the dodecahedron lies in a
  single plane.  Again we make a case distinction
  depending on the shape of the outer cycle.
  \nv{(i)~If the outer cycle is drawn as the boundary of a convex polygon,
  let~$\mathcal L_0 \subseteq L$ be the family of lines that support
  the edges on the outer cycle.  This family consists of at least
  three lines.}  None of them covers any of the at most~15 vertices
  remaining in the interior of the convex polygon.  Thus each of these
  vertices is an intersection point of two lines of
  $\mathcal{L}\setminus\mathcal{L}_0$.  Since
  $\mathcal{L}\setminus\mathcal{L}_0\le\rho-3$, this family of
  lines can generate at most $\binom{\rho-3}{2}$ intersection
  points.  Therefore, $\binom{\rho-3}{2}\ge 15$ and, hence, $\rho\ge
9$.
  (ii)~Assume that the outer cycle is drawn as a non-convex
  quadrilateral.
  Then the drawing is contained in a convex angle opposite to the
  reflex angle. To cover the angle sides, we need a family~$\mathcal
  L_0$ consisting of
  at least two lines. None of them covers any of the
  at least $15+1$ vertices remaining in the interior of the angle.
% Since we use  $\rho - 2$ lines to cover them,
  Similarly to the previous paragraph, we obtain
  $\binom{\rho-2}{2}\ge 16$ and, hence, $\rho\ge 9$.
  (iii)~\nv{Assume that the outer cycle is drawn as a
  pentagon $P$.}  Since
  the angle sum of a pentagon is~$3\pi$, $P$ has at most two reflex
  angles, and therefore, at least three convex angles. Each vertex of
  $D$ drawn as a vertex of a convex angle is an intersection point of
  (at least)
  three covering lines, because it has degree~3.
  There exists an edge~$e$ of~$P$ such that $P$ is contained
  in one of the half-planes created by the line~$\ell$
  spanned by~$e$ (see, for
  instance,~\cite{mmo-60}). It is easy to check that $\ell$ can cover
  only edge $e$ of
  the outer face of $D$. Then the family $\mathcal L\setminus\{\ell\}$
  covers all edges of $G$
  but~$e$. The angles of $P$ incident to $e$ are convex.
  Let $v$ be a vertex of $D$ drawn as a vertex of a convex angle not
incident to $e$.
  In order to cover~$D$, we need at least one pair of intersecting
  lines from $\mathcal L\setminus\{\ell\}$ for each vertex of~$D$
  different from~$v$
  and at least three such pairs for~$v$, that is, at least $19+3=22$
  pairs of intersecting lines in total.  Therefore,
  $\binom{\rho-1}{2}\ge 22$ and, hence, $\rho\ge 9$.
  Note that, in each of the three cases (i)--(iii), we have $\rho\ge
9$.

  Now assume that the drawing of~$D$ is not contained in a single
  plane.  Then its convex hull has (at least) four extreme points.
  In order to cover~$D$, we need at least one pair of intersecting
  lines of~$\mathcal L$ for each vertex of~$D$
  and at least three such pairs for each extreme point,
  that is, at least $16+4\cdot 3=28$
  pairs of intersecting lines in total.  Therefore,
  $\binom{\rho}{2}\ge 28$. But if we have equality then any two
  lines of~$\mathcal{L}$ intersect.
  So all of them share a common plane or a common point. In the first
  case the drawing is contained in a single plane; in the
  second case the family $\mathcal L$ cannot cover the drawing. Thus
  $\binom{\rho}{2}> 28$, and, hence, $\rho\ge 9$.

  \ref{prop:lines_I} If the drawing of the icosahedron~$I$ is not
  contained in a
  single plane, then we can pick four extreme points of the convex
  hull of the drawing.  Each of these points represents a vertex of
  degree~5, so we need five lines to cover edges incident to this
  vertex, that is, $20$ lines in total, but we have double-counted the
  lines that go through pairs of the extreme points that we picked.
  Of these, there are at most ${4 \choose 2}=6$.  Thus we need at
  least $20-6=14$ lines to cover the drawing.

  \begin{figure}[htb]
    \centering
    \includegraphics{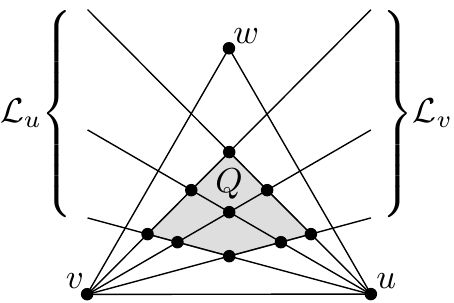}
    \caption{The families of lines $\mathcal{L}_u$ and $\mathcal{L}_v$}
    \label{fig:prop}
  \end{figure}

  Now assume that there exists a straight-line drawing of the
  icosahedron in a single plane covered by a family $\mathcal L$
  % consisting
  of twelve lines.  Let $u$, $v$, $w$ be the vertices of the outer
  face of~$I$.  Clearly, three distinct lines in~$\mathcal L$ form the
  triangle $uvw$.  For $s \in \{u,v,w\}$, we denote by $\mathcal{L}_s$
  the lines in~$\mathcal{L}$ that go through~$s$ and do not cover
  edges of the outer face.  Since $I$ is 5-regular, $|\mathcal{L}_s| =
  \deg(s)-2 = 3$.  Consider the set $P$ of intersection points between
  the line families $\mathcal{L}_u$ and $\mathcal{L}_v$.  The set $P$
  lies in the triangle $uvw$ and is bounded by the quadrilateral $Q$
  formed by the outer pairs of lines in $\mathcal{L}_v$ and
  $\mathcal{L}_u$; see Fig.~\ref{fig:prop}.

  The quadrilateral~$Q$ is convex and eight of the nine
  points in~$P$
  lie on the boundary of~$Q$, hence, for any line~$\ell$ in~$\mathcal
  L_w$, we have $|\ell \cap P| \le 3$.  Observe that $|\ell \cap P|=3$
  implies that $\ell$ goes through the only point of~$P$ that lies in
  the interior of~$Q$.  Thus the lines in~$\mathcal L_w$ can create at
  most seven triple intersection points with the lines
  in~$\mathcal{L}_u$ and~$\mathcal{L}_v$.

  The icosahedron is 5-regular, so all vertices must be placed at the
  intersection of at least three lines.  We need at least nine triple
  intersection points in order to place all $12-3$ inner vertices of
  the icosahedron---a contradiction.  \hfill\qed
%\end{proof}

\begin{proposition}
\label{prop:arcs}
\begin{enumerate*}[label=(\alph*)]
\item $\myarc(T) \,{\le}\, 3$;\label{prop:arcs_T}
\item $\myarc(O) \,{\le}\, 3$;\label{prop:arcs_O}
\item $\myarc(C) \,{\le}\, 4$;\label{prop:arcs_C}
\item $\myarc(D) \,{\le}\, 10$;\label{prop:arcs_D}
\item $\myarc(I) \le 7$. \label{prop:ico_arcs}
\end{enumerate*}
\end{proposition}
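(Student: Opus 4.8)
The plan is to prove all five statements constructively, since they are upper bounds: for each platonic graph I would exhibit a single crossing-free circular-arc drawing realizing the claimed number of arcs (see Figures~\ref{fig:tet_arc}, \ref{fig:oct_arc}, \ref{fig:cube_arc}, \ref{fig:dod_circle_arc}, and~\ref{fig:ico_circle_cover}). Producing such a drawing amounts to two separate tasks. The first is combinatorial: decompose the edge set into paths and cycles, each of which is to be placed on a common circle, where a path drawn along one circle becomes a single (open) arc and a cycle drawn on a circle becomes a single (closed) arc. The second is geometric: position the vertices so that every group is indeed concyclic while the whole drawing stays planar. I would settle the combinatorial part explicitly and then argue realizability.

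For the decompositions I would use the following. For the tetrahedron $T=K_4$, take a Hamiltonian $4$-cycle (one circle) together with its two diagonals; since the diagonals share no vertex, each is a separate arc, so this gives $3$ arcs. For the octahedron $O=K_{2,2,2}$ with parts $\{a_1,a_2\},\{b_1,b_2\},\{c_1,c_2\}$, observe that the four edges between any two parts form a $4$-cycle (e.g.\ $a_1 b_1 a_2 b_2$); the three bipartite $4$-cycles partition all $12$ edges, and drawing each as a full circle yields $3$ arcs. For the cube $C$ I would partition its $12$ edges into four paths of length $3$; here a short parity count shows that in any such decomposition every vertex is internal to exactly one path and an endpoint of exactly one other, so the four middle edges must form a perfect matching, and one concrete choice is to take the four ``vertical'' edges as middles and extend each by one edge of the top and one of the bottom square, giving $4$ arcs. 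For the dodecahedron $D$ I would invoke Schulz's $10$-arc drawing~\cite{s-dgfa-JGAA15}, or equivalently present the drawing of Fig.~\ref{fig:dod_circle_arc}, in which $10$ paths of length $3$ are grouped onto $5$ circles. For the icosahedron $I$ I would decompose its $30$ edges into $7$ groups---for instance a Hamiltonian cycle as one circle together with six length-$3$ paths, since the icosahedron minus a Hamiltonian cycle is cubic and admits such a $P_4$-decomposition---and place them on the $7$ circles of Fig.~\ref{fig:ico_circle_cover}.

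The remaining verification is routine: for each figure one checks that consecutive edges assigned to the same group really lie on a common circle (so the group is one arc) and that no two arcs cross, both of which follow directly from the explicit coordinates underlying the figures. The main obstacle is precisely this geometric realizability. A single combinatorial path can always be drawn as an arc in isolation, but forcing all groups to be simultaneously concyclic, planar, and crossing-free is a genuine constraint, and it is here that the platonic symmetry is exploited to choose the vertex positions. I expect the icosahedron to be the hardest case, since five-regularity pushes many edges through each vertex and the $7$ circles have to be fitted together tightly; for that graph I would rely on the exact construction depicted in Fig.~\ref{fig:ico_circle_cover} rather than on a purely symmetry-based placement.
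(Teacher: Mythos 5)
Your treatment of parts (a)--(d) is sound and is essentially the paper's own proof: the paper simply points to the drawings in Figures~\ref{fig:tet_arc}, \ref{fig:oct_arc}, \ref{fig:cube_arc}, and~\ref{fig:dod_circle_arc} and remarks that their validity is easy to check. Your explicit edge decompositions (Hamiltonian $4$-cycle plus the two diagonals for $T$; the three bipartite $4$-cycles of $K_{2,2,2}$, realizable as a Venn-type arrangement of three circles, for $O$; four length-$3$ paths with the vertical edges as middles for $C$; Schulz's drawing~\cite{s-dgfa-JGAA15} for $D$) are a reasonable way to organize that routine check.

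Part (e), however, contains a genuine error, and it is precisely at the step you yourself flag as the main obstacle. The decomposition you propose for the icosahedron---a Hamiltonian cycle drawn as one circle together with six paths of length~$3$---is \emph{geometrically unrealizable}, no matter how the vertices are placed. If the Hamiltonian cycle is drawn as a single closed arc, then all twelve vertices lie on its supporting circle $C_0$. Each length-$3$ path would then need a supporting circle containing its four vertices, all of which lie on $C_0$; since two distinct circles intersect in at most two points, no such circle exists, and the path's circle cannot coincide with $C_0$ either, because its edges would then overlap the Hamiltonian edges that already cover all of $C_0$. Your fallback of ``placing'' this decomposition on the seven circles of Fig.~\ref{fig:ico_circle_cover} cannot work, because that figure realizes a \emph{different} partition of $E(I)$: a $6$-cycle $K$ together with six paths of length~$4$ (three paths $L_0,L_1,L_2$ and three paths $M_0,M_1,M_2$). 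The paper's entire proof of $\myarc(I)\le 7$ consists of making exactly that drawing rigorous: it specifies the centers (in polar coordinates) and the radii of the seven supporting circles and verifies, via the law of cosines, that the required triple intersection points $A_i$, $B_i$, $C_i$, $D_i$ exist and the arcs meet only there. This realizability argument is the substance of the proposition for the icosahedron, and your proposal both omits it and commits to a combinatorial decomposition for which no such argument can exist.
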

\begin{proof}
\nv{For the upper bounds for~\ref{prop:arcs_T}--\ref{prop:arcs_D}
see the drawings of the graphs in Figures~\ref{fig:tet_arc},
\ref{fig:oct_arc}, \ref{fig:cube_arc}, and~\ref{fig:dod_circle_arc}
respectively. While it is easy to see that these drawings
are valid, we argue more carefully that the icosahedron does indeed
admit a drawing with seven arcs.}
To construct the drawing in Fig.~\ref{fig:ico_circle_cover} (for
details see Fig.~\ref{fig:icosahedron-arc-cover}), we
first cover the edges of the icosahedron by seven objects,
grouped into a single cycle $K$ and two sets $L=\{L_0, L_1,
L_2\}$ and~$M=\{M_0, M_1, M_2\}$,
where $K$ is a cycle of length~6 and all elements of~$L$
and~$M$ are simple paths of length~4; see Fig.~\ref{fig:pathpart}.
We identify the paths and cycles with
their drawings as arcs and circles.
For a set $S\in\{\{K\}, L, M\}$ and a number $i\in\{0,1,2\}$,
let $(d_S,\alpha_{S_i})$ be the polar coordinates of the
center~$c(S_i)$ of the circle of radius~$r_S$ that covers
arc~$S_i \in S$ (see Fig.~\ref{fig:proof}).  We set the coordinates
and radii as follows:
\begin{align}
\alpha_K &= 0
& d_K &= 0
& r_K &= 1 \\
\alpha_{L_i} &= i \cdot {2\pi}/{3}
& d_L &= (3 + \sqrt{3})/{2}
& r_L &= \sqrt{5/2 + \sqrt{3}} \\
\alpha_{M_i} &= \pi/2+ i\cdot {2\pi}/{3}
& d_M &= (3 - \sqrt{3})/{2}
& r_M &= \sqrt{5/2 - \sqrt{3}}
\end{align}
Using 
the law of cosines, it is easy to compute the intersection points:
\begin{align}
\{A_i\} &:= L_i\cap L_{i+1}\cap  M_i  & \Rightarrow\;
A_i &= \big(i \cdot {2\pi}/{3}, ({1+\sqrt{3}})/{2}\big); \\
\{B_i\} &:= L_i\cap L_{i+1}\cap  K    & \Rightarrow\;
B_i &= (i \cdot {2\pi}/{3}, 1); \\
\{C_i\} &:= M_i\cap M_{i+2}\cap  K    & \Rightarrow\;
C_i &= (\pi/{3} + i \cdot {2\pi}/{3}, 1); \\
\{D_i\} &:= L_i\cap M_{i}\cap M_{i+1} & \Rightarrow\;
D_i &= \big({\pi}/{2} + i \cdot {2\pi}/{3}, (\sqrt{3}-1)/{2}\big).
\end{align}
For $i = 0, 1, 2$, let~$L_i$ be the larger arc of the covering circle
between the points $A_i$ and $B_i$, let~$M_i$ be the larger arc of the
covering circle between the points $C_{i+1}$ and $D_{i+2}$ (with
indices modulo~3), and let $K$ be the whole unit circle.
\end{proof}

\begin{figure}[tb]
  \begin{subfigure}[b]{0.385\textwidth}
      \centering
      \includegraphics{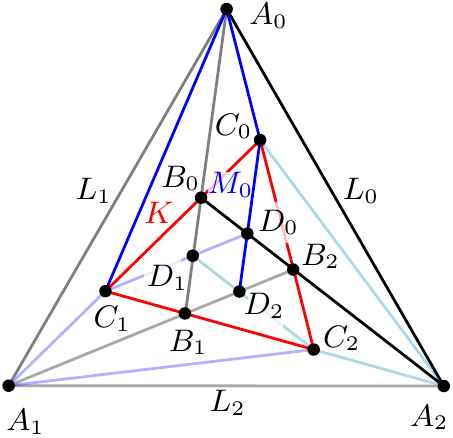}
      \caption{symmetric path partition: the black and gray arcs are
        in $L$, the light and dark blue arcs in $M$, and the red
        cycle is~$K$}
      \label{fig:pathpart}
    \end{subfigure}
    \hfill
    \begin{subfigure}[b]{0.6\textwidth}
      \centering
      \includegraphics{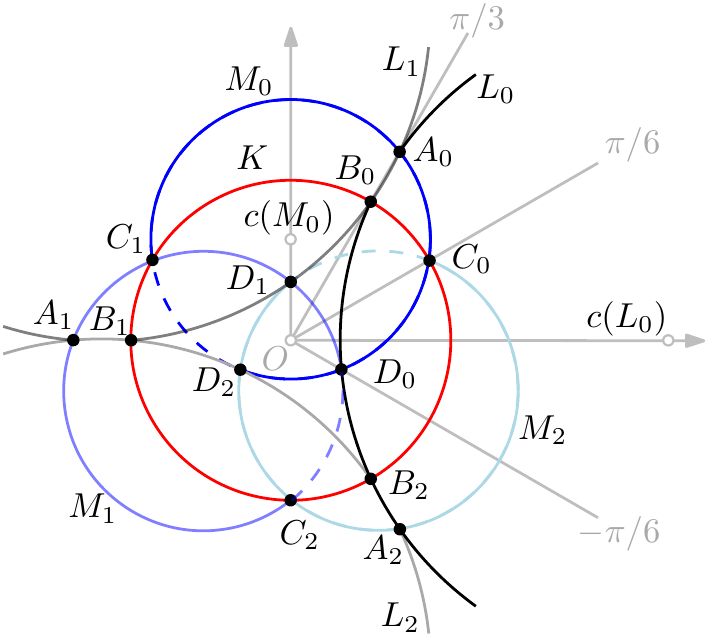}
      \caption{illustration of the proof of
        Proposition~\ref{prop:arcs}\ref{prop:ico_arcs}}
      \label{fig:proof}
    \end{subfigure}
    \caption{Bounding the arc number of the icosahedron}
    \label{fig:icosahedron-arc-cover}
\end{figure}

\section{Lower Bounds for $\sigma^1_d$}
\label{sec:lower}

Given a graph $G$, we obtain lower bounds for $\sigma^1_d(G)$ via
standard combinatorial characteristics of $G$ \nv{ in the same way as
for}
the bounds for
$\rho^1_d(G)$~\cite{cflrvw-dgflf-GD16}.
In particular, we prove a general lower bound
for $\sigma^1_d(G)$ in terms of the treewidth $\tw(G)$ of $G$,
which follows from the
fact that graphs with low parameter $\sigma^1_d(G)$ have small
separators.
This fact is interesting by itself and has yet another consequence:
graphs with bounded vertex degree can have a linearly large value
of~$\sigma^1_d(G)$ (hence, the factor of $n$ in the trivial bound
$\sigma^1_d(G)\le m\le n\cdot\Delta(G)/2$
is best possible).

We need the following definitions.
The \emph{linear arboricity} $\la(G)$ of a graph $G$ is the minimum
number of linear forests that partition the edge set of~$G$
\cite{h-cagI-ANYAS70}.  Let $W\subseteq V(G)$.
A set of vertices $S\subset V(G)$ is a \emph{balanced $W\!$-separator}
of
the graph~$G$ if $|W\cap C|\le|W|/2$ for every
connected component $C$ of $G - S$. Moreover, $S$ is a
\emph{strongly balanced
$W\!$-separator} if there is a partition $W\setminus S=W_1\cup W_2$
such that
$|W_i|\le|W|/2$ for both $i=1,2$ and there is no path between $W_1$
and $W_2$ that avoids~$S$.
Let $\sep_W(G)$ ($\sep^*_W(G)$) denote the minimum $k$ such that $G$
has
a (strongly) balanced $W\!$-separator $S$ with $|S|=k$.
Furthermore, let $\sep(G)=\sep_{V(G)}(G)$ and
$\sep^*(G)=\sep^*_{V(G)}(G)$.
Note that $\sep_W(G)\le\sep^*_W(G)$ for any $W \subseteq V(G)$ and,
in particular, $\sep(G)\le\sep^*(G)$.

It is known~\cite[Theorem 11.17]{FlumG06} that $\sep_W(G)\le\tw(G)+1$
for any $W\subseteq V(G)$.  On the other hand, $\tw(G)\le3k$ if
$\sep_W(G)\le k$ for every $W$ with $|W|=2k+1$.

Recall that
the bisection width $\bw(G)$ of a graph $G=(V,E)$ is the minimum
number of edges between two sets of vertices $W_1$ and $W_2$ with
$|W_1| =
\lceil n/2 \rceil$ and $|W_2| = \lfloor n/2 \rfloor$
partitioning~$V$.
Note that $\sep^*(G)\le\bw(G)+1$.

Now we use these graph parameters to \nv{bound the spherical cover
number from below}.  The proofs are similar to those regarding the
affine cover
number~\cite{cflrvw-dgflf-GD16}.
\nv{We restate Proposition~\ref{thm:rho_bw} as item (a) to make the following
theorem more self-contained.}

\begin{theorem}
\label{LowerSigma13}
For any integer $d \ge 1$ and any graph~$G$ with $n$ vertices
and $m$ edges, the
following bounds hold:
\begin{enumerate}[label=(\alph*)]
\item $\sigma^1_d(G)\ge \bw(G)/2$.
\label{thm:rho_bw_restated}

\item $\sigma^1_d(G)>n/10$ for almost all cubic graphs with $n$
  vertices; \label{thm:rho_cubic}

\item $\lceil \frac 32\sigma^1_d(G)\rceil \ge\la(G)$;
\label{thm:rho_la}

\item $\sigma^1_d(G)\ge\sep^*_W(G)/2$ for every $W\subseteq V(G)$;
\label{thm:rho_sep}

\item $\sigma^1_d(G) \ge \tw(G)/6$.
\label{thm:rho_tw}

\end{enumerate}
\end{theorem}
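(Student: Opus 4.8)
The plan is to prove the five items in order, deriving the later ones from the earlier ones and from the preliminary results. Item~\ref{thm:rho_bw_restated} needs no new work: it is exactly Proposition~\ref{thm:rho_bw}, established via a bisecting hyperplane that meets every circle (hence every arc) at most twice. For item~\ref{thm:rho_cubic} I would combine~\ref{thm:rho_bw_restated} with the known fact that almost all cubic graphs~$G$ on~$n$ vertices satisfy $\bw(G) > n/5$, so that $\sigma^1_d(G) \ge \bw(G)/2 > n/10$. This mirrors the affine statement in~\cite{cflrvw-dgflf-GD16}, where each segment is crossed only once and one therefore obtains the stronger conclusion $\rho^1_d(G) > n/5$.

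The two items that carry the real content are~\ref{thm:rho_la} and~\ref{thm:rho_sep}. For the linear-arboricity bound~\ref{thm:rho_la}, the key structural observation is that a single covering circle induces a subgraph of \emph{maximum degree at most two}. Fix an optimal circular-arc drawing on $\sigma := \sigma^1_d(G)$ circles; every edge is drawn on exactly one circle, so the circles partition $E(G)$ into subgraphs $H_1,\dots,H_\sigma$. A vertex~$v$ on a circle~$C_i$ splits~$C_i$ into two elementary arcs at~$v$, and each edge of~$H_i$ incident to~$v$ occupies one of these two sides near~$v$; two edges on the same side would overlap, so $\deg_{H_i}(v)\le 2$. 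Hence each~$H_i$ is a disjoint union of paths and cycles. I would then argue that~$H_i$ contains \emph{at most one} cycle: traversing a cycle drawn with pairwise interior-disjoint arcs of~$C_i$, the arc entering a vertex blocks one of its two sides, forcing the outgoing arc to continue in the same rotational direction, so the cycle must wind once around all of~$C_i$; a second cycle would then overlap it. Writing~$s$ for the number of circles whose subgraph contains a cycle, I delete one cycle edge from each such circle, leaving $\sigma$ linear forests, and group the~$s$ deleted edges two-per-forest into $\lceil s/2\rceil$ further linear forests. This gives
\begin{equation}
\la(G)\le \sigma + \lceil s/2\rceil \le \sigma + \lceil \sigma/2\rceil = \lceil \tfrac32\sigma\rceil,
\end{equation}
which is item~\ref{thm:rho_la}.

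For the separator bound~\ref{thm:rho_sep} I would reuse the geometric input of~\ref{thm:rho_bw_restated}. Given an optimal drawing, choose a hyperplane~$h$, missing all vertices, whose two open sides contain sets $W_1,W_2\subseteq W$ of size at most $|W|/2$ each. Since~$h$ meets each circle in at most two points and distinct edges on a circle are interior-disjoint, at most two edges per circle cross~$h$, hence at most $2\sigma$ edges of~$G$ in total. Placing one endpoint of each crossing edge into~$S$ gives $|S|\le 2\sigma$, and any $W_1$--$W_2$ path must use an edge crossing~$h$, whose endpoint lies in~$S$; thus~$S$ is a strongly balanced $W$-separator and $\sigma\ge \sep^*_W(G)/2$. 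The parity adjustment when $|W|$ is odd is handled exactly as in~\cite[Thm.~9]{cflrvw-dgflf-GD16}. Item~\ref{thm:rho_tw} is then immediate: since $\sep_W(G)\le\sep^*_W(G)\le 2\sigma$ for \emph{every}~$W$, the quoted bound $\tw(G)\le 3k$ whenever $\sep_W(G)\le k$ for all~$W$ with $|W|=2k+1$ gives, with $k=2\sigma$, that $\tw(G)\le 6\sigma$.

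I expect the main obstacle to be item~\ref{thm:rho_la}. The maximum-degree-two property is routine, but the \emph{at most one cycle per circle} claim and the factor-$\tfrac32$ counting need care, and it is precisely this winding argument---which has no counterpart in the straight-line setting, where each line already induces a linear forest---that distinguishes the spherical bound from its affine analogue. By contrast, item~\ref{thm:rho_sep} is a faithful adaptation of the affine separator proof, with the single change that a hyperplane now crosses a circle twice rather than a line once.
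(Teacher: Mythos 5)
Your proposal is correct and takes essentially the same route as the paper: item (a) is just Proposition~\ref{thm:rho_bw}; item (b) combines it with the known bisection-width bound for random cubic graphs; item (c) removes one edge per covering circle and groups the removed edges into pairs of linear forests; item (d) adapts the affine separator argument of Chaplick et al.\ with the factor-2 loss because a circle meets a hyperplane twice; and item (e) invokes the stated treewidth--separator relationship with $k=2\sigma$. The only difference is one of detail, not of method: you make explicit some facts the paper leaves implicit or delegates to the cited affine proof (that each circle induces a subgraph with at most one cycle, and the construction of the bisecting hyperplane), which is a welcome tightening rather than a new approach.
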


\begin{proof}
  \nv{
  For the  proof of \ref{thm:rho_bw_restated} see
  Proposition~\ref{thm:rho_bw}.
  }

  \ref{thm:rho_cubic}~The claim follows
  from~\ref{thm:rho_bw_restated} and from
  the fact that a random cubic graph on $n$ vertices has bisection
  width at least $n/4.95$ with probability $1-o(1)$~\cite{KostochkaM93}.

  \ref{thm:rho_la}~Given the drawing
  of the graph $G$ on $r=\sigma^1_d(G)$ circles, we remove an edge
  from each of the circles (provided such an edge exists), obtaining
  at (most) $r$ linear forests. The removed edges we group into
  (possible, degenerated) pairs, obtaining at most $\lceil r/2\rceil$
  additional linear forests. So, $\la(G)\le r+\lceil r/2\rceil$.

  \nv{\ref{thm:rho_sep} The proof is similar to Theorem~9(c) in~\cite{cflrvw-dgflf-GD16}.
 The difference of a factor of $1/2$ is due to the fact that a straight
 line pierces the plane at most once whereas and a circle pierces the hyperplane at most twice.
  }

  \nv{
  \ref{thm:rho_tw} follows from~\ref{thm:rho_sep} by the above mentioned
  relationship  between treewidth and balanced separators.
  }
\end{proof}

\begin{corollary}
  $\sigma^1_d(G)$ cannot be bounded from above by a function of
  $\la(G)$ or $v_{\ge 3}(G)$ or $\tw(G)$, where $v_{\ge 3}(G)$ is the
  number of vertices with degree at least 3.
\end{corollary}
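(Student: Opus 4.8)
The plan is to refute each of the three potential bounds separately, by producing an infinite family of graphs along which the parameter in question stays bounded while $\sigma^1_d$ tends to infinity; the existence of any bounding function $f$ would then force $\sigma^1_d$ to be bounded along that family, a contradiction. The two engines I would use are the bisection-width bound $\sigma^1_d(G)\ge\bw(G)/2$ of Theorem~\ref{LowerSigma13}\ref{thm:rho_bw_restated} and the cubic-graph bound $\sigma^1_d(G)>n/10$ of Theorem~\ref{LowerSigma13}\ref{thm:rho_cubic}.

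For the linear arboricity I would take graphs of bounded maximum degree, for concreteness cubic graphs. Since any proper edge colouring partitions the edges into matchings, which are linear forests, Vizing's theorem gives $\la(G)\le\chi'(G)\le\Delta(G)+1$, so cubic graphs satisfy $\la(G)\le 4$. By Theorem~\ref{LowerSigma13}\ref{thm:rho_cubic}, almost all cubic graphs on $n$ vertices have $\sigma^1_d(G)>n/10$ (for $d\ge 3$, where $\sigma^1_d$ is defined for every graph), so along this family $\la$ stays below $4$ while $\sigma^1_d\to\infty$. For $d=2$ one can instead invoke the planar separator theorem to pick planar cubic graphs whose bisection width grows like $\sqrt{n}$ and apply~\ref{thm:rho_bw_restated}.

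For $v_{\ge 3}$ and for the treewidth a single family suffices, namely the complete bipartite graphs $K_{2,n}$. For $n\ge 3$ the two vertices of the small side have degree $n$ and the other $n$ vertices have degree $2$, so $v_{\ge 3}(K_{2,n})=2$; arranging the bags $\{a,b,v_i\}$ (with $a,b$ the two high-degree vertices) along a path gives a tree decomposition of width $2$, and as $K_{2,n}\supseteq C_4$ we get $\tw(K_{2,n})=2$. By Proposition~\ref{lem:bwComplete}, $\bw(K_{2,n})=\lceil 2n/2\rceil=n$, so Theorem~\ref{LowerSigma13}\ref{thm:rho_bw_restated} yields $\sigma^1_d(K_{2,n})\ge n/2$. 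Thus along $(K_{2,n})_{n\ge 3}$ both $v_{\ge 3}$ and $\tw$ remain equal to $2$ while $\sigma^1_d\to\infty$; as $K_{2,n}$ is planar this also covers the case $d=2$.

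There is no real obstacle here beyond choosing the right families and reading off the already-proven lower bounds; the one point worth noting is that the two families are genuinely complementary and cannot be merged. The graphs $K_{2,n}$ have $\la(K_{2,n})\ge\Delta/2=n/2\to\infty$, so they cannot witness the linear-arboricity claim, while the cubic graphs have $v_{\ge 3}=n$ and (being expanders) unbounded treewidth, so they cannot witness the other two. Hence both constructions are needed, each supplying exactly the parameter that the other cannot control.
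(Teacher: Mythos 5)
Your proposal is correct in its essentials but follows a genuinely different route from the paper. The paper's engine is the degree-counting bound of Proposition~\ref{prop:combinatorial_argument}\ref{thm:rho_mod_1}: for $\la$ it uses cubic graphs (citing Akiyama et al.~\cite{AEH} for the exact value $\la(G)=2$, where you use the weaker but sufficient Vizing-type bound $\la\le\chi'\le 4$), and for $\tw$ it uses caterpillars with linearly many degree-3 vertices (so $\tw=1$); both give $\sigma^1_d=\Omega(\sqrt n)$ deterministically, for every graph in the family and every $d$ at once. For $v_{\ge 3}$ the paper takes disjoint unions of $k$ cycles (so $v_{\ge 3}=0$) and an ad hoc counting argument: an arrangement of $\ell$ circles has at most $\ell^2$ intersection points, and each cycle consumes two of them or a whole circle, whence $\sigma^1_d=\Omega(\sqrt k)$. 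Note that this last family is invisible to your bisection-width engine (a disjoint union of cycles has bounded, essentially zero, bisection width), which is exactly why your substitute $K_{2,n}$ is a nice alternative: by Proposition~\ref{lem:bwComplete}, $\bw(K_{2,n})=n$, so Theorem~\ref{LowerSigma13}\ref{thm:rho_bw_restated} gives $\sigma^1_d(K_{2,n})\ge n/2$, one planar family kills $v_{\ge 3}$ and $\tw$ simultaneously (including the case $d=2$), and the bisection-width route buys \emph{linear} rather than square-root separations. The same is true of your use of Theorem~\ref{LowerSigma13}\ref{thm:rho_cubic} for $\la$ (gap $n/10$ versus the paper's $\sqrt n$), at the price of the family being random rather than explicit.

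The one genuinely flawed step is your patch for $d=2$ in the linear-arboricity case. The planar separator theorem says every planar graph \emph{has} a balanced separator of size $O(\sqrt n)$; it is an upper-bound statement and cannot be used to ``pick'' planar cubic graphs whose bisection width grows like $\sqrt n$. Such graphs do exist (hexagonal grids, say), but establishing the lower bound on their bisection width is a separate construction-specific argument that neither you nor the paper supplies. The clean fix is the paper's own tool: Proposition~\ref{prop:combinatorial_argument}\ref{thm:rho_mod_1} applies to every cubic graph in every dimension, so any family of planar cubic graphs (e.g., prisms) has $\la(G)=2$ while $\sigma^1_2(G)>\sqrt n$. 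With that substitution your argument is complete; also note that your closing claim that two families are ``needed'' holds only relative to your bisection-width toolkit --- the paper's disjoint-cycle family (with its arrangement-counting bound) has $\la=2$, $v_{\ge3}=0$, and $\tw=2$, so a single family can in principle witness all three statements.
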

\begin{proof}
  $\la(G)$: Akiyama et al.~\cite{AEH} showed that, for any cubic
  graph~$G$, $\la(G)=2$.  On the other hand, $v_{\ge 3}(G)=n$, so
  $\sigma^1_3(G)>\sqrt{n}$ by
  Proposition~\ref{prop:combinatorial_argument}\ref{thm:rho_mod_1}.
  Theorem~\ref{LowerSigma13}\ref{thm:rho_cubic} yields an even
larger gap.

  $v_{\ge 3}(G)$: Let $G$ be the disjoint union of $k$ cycles.  Then
  $v_{\ge 3}(G)=0$.  Clearly, an arrangement~$A$ of $\ell$ circles has
  at most $\ell^2$ vertices.  Each cycle of~$G$ ``consumes'' at least
  two vertices of~$A$ or a whole circle, so
  $\sigma^1_d(G)=\Omega(\sqrt{k})$.

  $\tw(G)$: Let $G$ be a caterpillar with linearly many vertices
  of degree~3.  Then, $\tw(G)=1$.  On the other hand, by
  Proposition~\ref{prop:combinatorial_argument}\ref{thm:rho_mod_1},
we
have
$\sigma^1_d(G)=\Omega(\sqrt
  n)$.
\end{proof}

\begin{lemma}
  \label{lem:nested_cycle2}
  A circular-arc drawing $\Gamma\subset\mathbb{R}$ of a graph~$G$ that
  contains~$k$ nested cycles cannot be covered by fewer than $k$
  circles.
\end{lemma}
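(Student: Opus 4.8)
The plan is to compare, for a cleverly chosen straight line~$\ell$, the number of times~$\ell$ crosses the nested cycles with the number of times it can meet the circles of a cover. So suppose, for contradiction, that $\Gamma$ (which lives in~$\mathbb{R}^2$) is covered by a family of $r$ circles; I want to show $r\ge k$. Write $\gamma_1,\dots,\gamma_k$ for the closed curves in~$\Gamma$ that represent the $k$ nested cycles, indexed so that $\gamma_1$ is innermost. Since $\Gamma$ is crossing-free, each $\gamma_i$ is a Jordan curve, and nestedness means $\gamma_i$ lies in the bounded region $\operatorname{int}(\gamma_{i+1})$ for every~$i$. Fix a point $p\in\operatorname{int}(\gamma_1)$, so that~$p$ lies in the bounded interior of every~$\gamma_i$.

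First I would choose a line~$\ell$ through~$p$ that is \emph{generic}, in the sense that it passes through no vertex of the drawing, through no point shared by two of the curves~$\gamma_i$, and coincides with none of the (finitely many) covering circles that happen to be straight lines. Only finitely many directions through~$p$ are thereby excluded, so such an~$\ell$ exists. Next I would argue that $\ell$ crosses each $\gamma_i$ at least twice: writing $\ell$ as the union of the two rays emanating from~$p$ in opposite directions, each ray leaves the bounded region $\operatorname{int}(\gamma_i)$ on its way to infinity and hence meets~$\gamma_i$ at least once; as $p\notin\gamma_i$, the two rays contribute two distinct crossing points. By the choice of~$\ell$ these crossings avoid all shared points of the curves, so crossings belonging to different cycles are pairwise distinct. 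Consequently $\ell$ meets $\bigcup_i\gamma_i$, and thus~$\Gamma$, in at least $2k$ distinct points.

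Finally I would count from the covering side. Since $\Gamma$ is contained in the union of the $r$ covering circles, every one of the $\ge 2k$ points above lies on some covering circle. A line meets a circle in at most two points (and a line meets a line in at most one), so $\ell$ meets the union of the $r$ covering objects in at most $2r$ points. Combining the two counts yields $2k\le 2r$, that is, $r\ge k$, which is exactly the claim.

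The main obstacle I anticipate is the bookkeeping around genericity. I must simultaneously guarantee that~$\ell$ avoids every vertex, produces two distinct crossings with each~$\gamma_i$, keeps the crossings of distinct cycles disjoint (this is where I use that nested cycles are drawn as disjoint curves, e.g.\ that they are vertex-disjoint), and is not itself one of the covering lines. Each of these requirements rules out only finitely many directions through~$p$, so a valid~$\ell$ is available; verifying this is the only delicate point, after which the two counting inequalities are immediate.
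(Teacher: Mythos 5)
Your proof is correct and takes essentially the same approach as the paper's: fix a point inside the innermost Jordan curve, take a line $\ell$ through it, note that $\ell$ meets each of the $k$ nested curves in at least two points but meets each covering circle in at most two points, and conclude $2k\le 2r$, i.e., $r\ge k$. The only difference is that the paper uses an \emph{arbitrary} line and skips the genericity bookkeeping (avoiding vertices, shared points, and degenerate covering ``circles''), which your write-up adds as a refinement rather than as a different argument.
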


\begin{proof}
  Fix any point inside the closed Jordan curve in~$\Gamma$ that
  corresponds to the innermost cycle of~$G$.  Let $\ell$ be an
  arbitrary line through this point. \nv{ Then $\ell$ crosses at
  least twice each of the Jordan curves that correspond to the
  nested cycles in~$G$.  Hence, there are at least $2k$ points where
  $\ell$ crosses~$\Gamma$.}

  On the other hand, consider any set of $r$ circles whose union
  covers~$\Gamma$.  Then it is clear that $\ell$ crosses each of these
  $r$ circles in at most two points, so there are at most $2r$ points
  where~$\ell$ crosses~$\Gamma$.  Putting together the two
  inequalities, we get $r\ge k$ as desired.
\end{proof}

At last we remark that there are graphs whose $\sigma_3^1$-value is a
lot smaller than their $\sigma_2^1$-value.

\begin{theorem}
  \label{thm:nested-triangles}
  For infinitely many $n$ there is a planar graph $G$ on $n$ vertices
  with $\sigma^1_2(G)=\Omega(n)$ and $\sigma^1_3(G)=O(n^{2/3})$.
\end{theorem}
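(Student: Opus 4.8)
The plan is to let $G$ be the \emph{nested-triangles graph} on $n=3k$ vertices, namely the Cartesian product $C_3\,\square\,P_k$: its vertices are the pairs $(i,j)$ with $i\in\{0,1,2\}$ and $1\le j\le k$, each level $j$ carries a triangle $T_j$ on $\{(0,j),(1,j),(2,j)\}$, and three ``rung'' edges $(i,j)(i,j{+}1)$ join consecutive levels. I would first note that $G$ is $3$-connected, being the $1$-skeleton of a stack of triangular prisms, so by Whitney's theorem its planar embedding is essentially unique, and in it the triangles $T_1,\dots,T_k$ are pairwise nested separating cycles.

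The lower bound is then immediate. In every crossing-free circular-arc drawing of $G$ in the plane the forced embedding makes $T_1,\dots,T_k$ appear as $k$ nested Jordan curves, so Lemma~\ref{lem:nested_cycle2} yields $\sigma^1_2(G)\ge k=n/3=\Omega(n)$.

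For the upper bound I would construct a crossing-free circular-arc drawing in $\mathbb{R}^3$ covered by $O(k^{2/3})$ circles. The design is governed by the two-point rule---any two distinct circles meet in at most two points---which underlies Proposition~\ref{prop:combinatorial_argument} and already forces $\sigma^1_3(G)=\Omega(\sqrt k)$. The rung edges are cheap: cutting each of the three columns into consecutive pieces and placing each piece on its own circle covers all $3(k-1)$ rung edges. The triangle edges are the expensive part, because the two-point rule lets any single circle meet a given column-piece circle at most twice and hence pick up only a bounded number of triangle edges per piece; to amortize, these covering circles must instead interleave rung and triangle edges into long cocircular paths that run through many levels. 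I would tune the piece length so that the rung circles and the triangle-covering circles are balanced, and read off a total of $O(k^{2/3})=O(n^{2/3})$ circles.

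The difficulty is entirely in the geometric realization rather than in the counting. I must produce coordinates in $\mathbb{R}^3$ in which all the prescribed vertex groups are \emph{simultaneously} cocircular, every edge is a non-overlapping arc of one of these circles, and no two arcs cross---all while no pair of the chosen circles exceeds two common points, which is exactly what rules out a handful of giant circles and forces the interleaved paths. As in the upper-bound argument for Theorem~\ref{thm:sigma13K}, I would place the points incrementally in general position, checking that the configurations to be excluded form only a nowhere-dense subset of $\mathbb{R}^3$ so that a next point always exists, and I would separate the blocks into disjoint slabs so that crossings across blocks are automatically avoided. Reconciling exact cocircularity with crossing-freeness under the two-point rule is the delicate step, and it is this realizability bottleneck---not any combinatorial obstruction---that makes the achievable bound $O(n^{2/3})$ rather than the information-theoretic optimum $\Omega(\sqrt n)$.
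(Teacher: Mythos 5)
Your lower bound is exactly the paper's: same graph family $C_3\,\square\,P_k$, same forced nested triangles, same application of Lemma~\ref{lem:nested_cycle2}, giving $\sigma^1_2(G)\ge k=\Omega(n)$. That half is fine.

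The upper bound, however, has a genuine gap: you never actually produce the drawing. Your plan---cut the columns into pieces, interleave rung and triangle edges into long cocircular paths, ``tune the piece length,'' and ``read off'' $O(k^{2/3})$ circles---is a proof obligation, not a proof. No piece length is specified, no count is carried out showing the two contributions balance at $k^{2/3}$, and no coordinates (or even a scheme for coordinates) are given that make the prescribed vertex groups cocircular while keeping arcs non-crossing. You say yourself that ``reconciling exact cocircularity with crossing-freeness \ldots is the delicate step,'' i.e., the core of the argument is left unresolved. The incremental general-position trick you borrow from the proof of Theorem~\ref{thm:sigma13K} does not help here either: that technique \emph{perturbs} points to avoid unwanted coincidences, whereas your construction needs to \emph{enforce} many exact cocircularity constraints simultaneously, which is the opposite kind of requirement. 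In addition, the claim that $G$ is $3$-connected with an essentially unique embedding is stated but not needed in this strength; what the lower bound needs is only that every planar embedding contains $k$ nested cycles, which is what the paper (via Chaplick et al.) relies on.

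What you missed is that no new construction is needed at all. The paper's upper bound is a two-line reduction: by Proposition~\ref{prop:sigma-le-rho} (inversion turns any straight-line drawing on lines into a circular-arc drawing on circles), $\sigma^1_3(G)\le\rho^1_3(G)$, and Chaplick et al.~\cite{cflrvw-dgflf-GD16} already proved $\rho^1_3(C_3\times P_k)=O(n^{2/3})$ for this very family. So the $O(n^{2/3})$-circle drawing is simply the image under inversion of their $O(n^{2/3})$-line drawing. Replacing your unfinished direct construction by this reduction would complete your proof.
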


\begin{proof}
  We use the same family $(G_i)_{i \ge 1}$ of graphs as Chaplick et
  al.~\cite[Theorem~24(b)]{cflrvw-dgflf-GD16} with $G_i=C_3 \times
  P_i$ and~$P_i$ a path with $i$ vertices.  Then~$G_i$ has $n_i=3i$
  vertices, $\rho^1_2(G_i)=\Omega(n_i)$, and
  $\rho^1_3(G_i)=O(n_i^{2/3})$.  The lower bound on $\sigma^1_2(G_i)$
  follows from Lemma~\ref{lem:nested_cycle2}.  The upper bound on
  $\sigma^1_3(G_i)$ follows from Proposition~\ref{prop:sigma-le-rho}
  for $l=1$ and $d=3$, which states that, for any graph~$G$,
  $\sigma^1_3(G) \le \rho^1_3(G)$.
\end{proof}

\section{An MIP Formulation for Estimating the Segment Number}
\label{sec:ilp}

In this section, we exploit an integer programming formulation for
\emph{locally consistent angle assignments}~\cite{dett-gd-99},
which we define below, to obtain lower bounds on the segment numbers
of planar graphs.  Our MIP
determines a locally consistent angle assignment with the maximum
number of $\pi$\--angles between incident edges.  Note that such angle
assignments are not necessarily realizable with straight-line edges in
the plane.  This is why the MIP yields only an upper bound for the
number of $\pi$-angles---and a lower bound for the segment number.
For the platonic graphs, however, it turns out that the bounds are
tight; see Tables~\ref{tab:seg_arc} and~\ref{tab:piangles}.

Let $G = (V, E)$ be a 3-connected graph with fixed embedding given by
a set~$\mathcal{F}$ of faces and an outer face~$f_0$.  For any vertex
$v\in
V$ and any face $f\in \mathcal{F}$, we introduce a \nv{fractional}
variable~$x_{v,f}
\in (0,2)$ whose value is intended to express the size of the angle
at~$v$ in~$f$, divided by~$\pi$.  Thus, $(\pi\cdot x_{v, f})_{v \in V, f
  \in \mathcal{F}}$ is an angle assignment for~$G$.  The following
constraints
guarantee that the assignment is locally consistent.  (For a
vertex~$v$
and a face~$f$, we write $v \sim f$ to express that $v$ is incident
to~$f$.)
\begin{align}
    \sum_{f\sim v} x_{v, f} &= 2 &&\text{for each } v \in V;\\
  % \label{con:angles_of_face}
  \sum_{v\sim f} x_{v, f} &= \deg(f) - 2 \qquad
  &&\text{for each } f \in \mathcal{F} \setminus \{f_0\};\\
  \sum_{v\sim f_0} x_{v, f_0} &= \deg(f_0) + 2.\\
\intertext{For any vertex~$v$, let $L_v = \langle v_1, \dots, v_k
\rangle$ be the list of
vertices adjacent to $v$, in clockwise order as they appear in the
embedding.  % According to Lemma~\ref{lem:3v1f}
Due to the 3-connectivity of~$G$, any two vertices $v_t$ and
$v_{t+1}$ that are consecutive in~$L_v$ (and adjacent to~$v$) uniquely
define a face $f(v,t)$ incident to~$v$, $v_t$, and $v_{t+1}$.
% \cite[Lemma~2]{krw-dgfcf-arXiv17}.
For two vertices~$v_i$ and~$v_j$ with $i<j$, we express the angle
$\angle(v_ivv_j)$ as the sum of the angles at~$v$ in the faces
between~$v_i$ and~$v_j$.  As shorthand, we use
$y_{v,i,j}=\angle(v_ivv_j)/\pi \in (0,2)$:}
  y_{v,i,j} &=  \sum_{t=i}^{j-1} x_{v, f(v,t)} \qquad
  &&\text{for each } v \in V, \, 1 \le i < j \le \deg(v). \\
\intertext{We want to maximize the number of $\pi$\--angles between
any two edges
incident to the same vertex.  To this end, we introduce a 0--1
variable $s_{v,i,j}$ for any vertex~$v$ and $1 \le i<j \le \deg(v)$.
The intended meaning of $s_{v,i,j}=1$ is that $\angle(v_ivv_j)=\pi$.
We add the following constraints to the MIP:}
    s_{v,i,j} &\in \; \{0,1\} \\[-3.5ex]
    s_{v,i,j} &\le \; y_{v,i,j}
    \qquad\left.\rule{0ex}{5.5ex}\right\}
    &&\text{for each } v \in V, \, 1 \le i < j \le \deg(v).\\[-3.5ex]
    s_{v,i,j} &\le \; 2 - y_{v,i,j}
\intertext{If $y_{v,i,j} < 1$, the second constraint
%~\eqref{con:angle_too_small}
will force $s_{v,i,j}$ to be~0 and the third constraint
% ~\eqref{con:angle_too_large}
will not be effective.  If $y_{v,i,j} > 1$, the third constraint
%~\eqref{con:angle_too_large}
will force $s_{v,i,j}$ to be~0, and the second constraint
% ~\eqref{con:angle_too_small}
will not be effective.  Only if $y_{v,i,j} = 1$ (and
$\angle(v_ivv_j)=\pi$), both constraints will allow $s_{v,i,j}$ to
be~1.
This works because we want to maximize the total
number of $\pi$-angles between incident edges in a locally
consistent angle assignment. \nv{To this end, we use the following
objective:}}
\maximize & \sum_{v\in V} \; \sum_{1 \le i < j \le \deg(v)}
s_{v,i,j}.
\end{align}
Every $\pi$-angle between incident edges saves a segment; hence, in
any straight-line drawing of~$G$, the number of segments equals the
number of edges minus the number of $\pi$-angles.  In particular, this
holds for a drawing that minimizes the number of segments (and
simultaneously maximizes the number of $\pi$-angles).  Thus,
\begin{equation}
  \myseg(G) = |E| - \mypiang(G).
\end{equation}
\nv{Since 
\[\mypiang(G) \le \sum_{v\in V} \; \sum_{1 \le i < j \le
\deg(v)} s_{v,i,j},\] 
the above relationship provides the lower bound 
\[\myseg(G) \ge |E| - \sum_{v\in V} \; \sum_{1 \le i < j \le
\deg(v)} s_{v,i,j}\] for the segment number, which can be
computed by solving the MIP.}

Our MIP has $O(n^3)$ variables and constraints.  The experimental
results for the platonic graphs (which are 3-connected and thus have a
unique planar embedding) are displayed in Table~\ref{tab:piangles}.

\begin{table}
\newcolumntype{d}{D{.}{.}{3.1}}
\centering
\setlength{\tabcolsep}{2ex}
\nv{
\begin{tabular}{ldddd}
\toprule
graph $G$         & \multicolumn{1}{c}{octahedron}
& \multicolumn{1}{c}{cube}  & \multicolumn{1}{c}{dodecahedron}
& \multicolumn{1}{c}{icosahedron} \\
\midrule
$\mypiang(G)\le$       & 3      & 5               & 17     &  15  \\
$\myseg(G)$\hfill$\ge$ & 9      & 7               & 13     &  15  \\
\midrule
variables         & 60     & 48              & 120    &  180 \\
constraints       & 137    & 114             & 277    &  395 \\
% \midrule
runtime [s]       & 0.011    & 0.009             & 0.015    &
0.066 \\
\bottomrule
\end{tabular}
}
\medskip
\caption{Upper bounds on the number of $\pi$-angles and corresponding
  lower bounds on the segment numbers of the platonic graphs (except
  for the tetrahedron) obtained by the MIP and sizes of the MIP
  formulation for these instances.
  Running times were measured on a 64-bit machine with
  7.7 GB main memory and four Intel i5 cores with 1.90 GHz, using
  the MIP solver IBM ILOG CPLEX Optimization Studio 12.6.2.}
\label{tab:piangles}
\end{table}

\nv{In addition, to check the capabilities of the MIP, we have tested
  it on a family of triangulations~$(G_k)_{k\ge2}$ constructed by
  Dujmovi\'c et al.~\cite[Lemma~17]{Dujmovic2007DrawingsOP}; a variant
  of the nested-triangles graph (see
  Fig.~\ref{fig:thetriangulationsexact}).  The graph~$G_2$ is the
  octahedron.  For $k>2$, the triangulation $G_k$ is created by
  recursively nesting a triangle into the innermost triangle of
  $G_{k-1}$ and connecting its vertices to the vertices of the
  triangle it was nested into.  Note that $G_k$ has $n_k = 3k$
  vertices.  Dujmovi\'c et al.\ showed a lower bound of $2n_k-6$ on
  the segment number
  of~$G_k$ and a tight lower bound of $2n_k-3$ (see the proof
  of \cite[Lemma~17]{Dujmovic2007DrawingsOP} and
  Fig.~\ref{fig:thetriangulationsexact}) on the
  number of segments given the fixed embedding.  Figure~\ref{fig:chart}
  shows the runtime of the MIP in logarithmic scale for the
  triangulations $G_2,G_3,\dots,G_8$.  As expected, the runtime is (at
  least) exponential.  Interestingly, for each of these
  (embedded) graphs, our MIP finds a solution with $2n_k-3$
  ``segments'', thus matching the tight lower bound of Dujmovi\'c et
  al.\ for the fixed-embedding case.}

  \begin{figure}[tb]
    \begin{subfigure}[b]{0.39\textwidth}
      \centering
      \includegraphics[page=2]{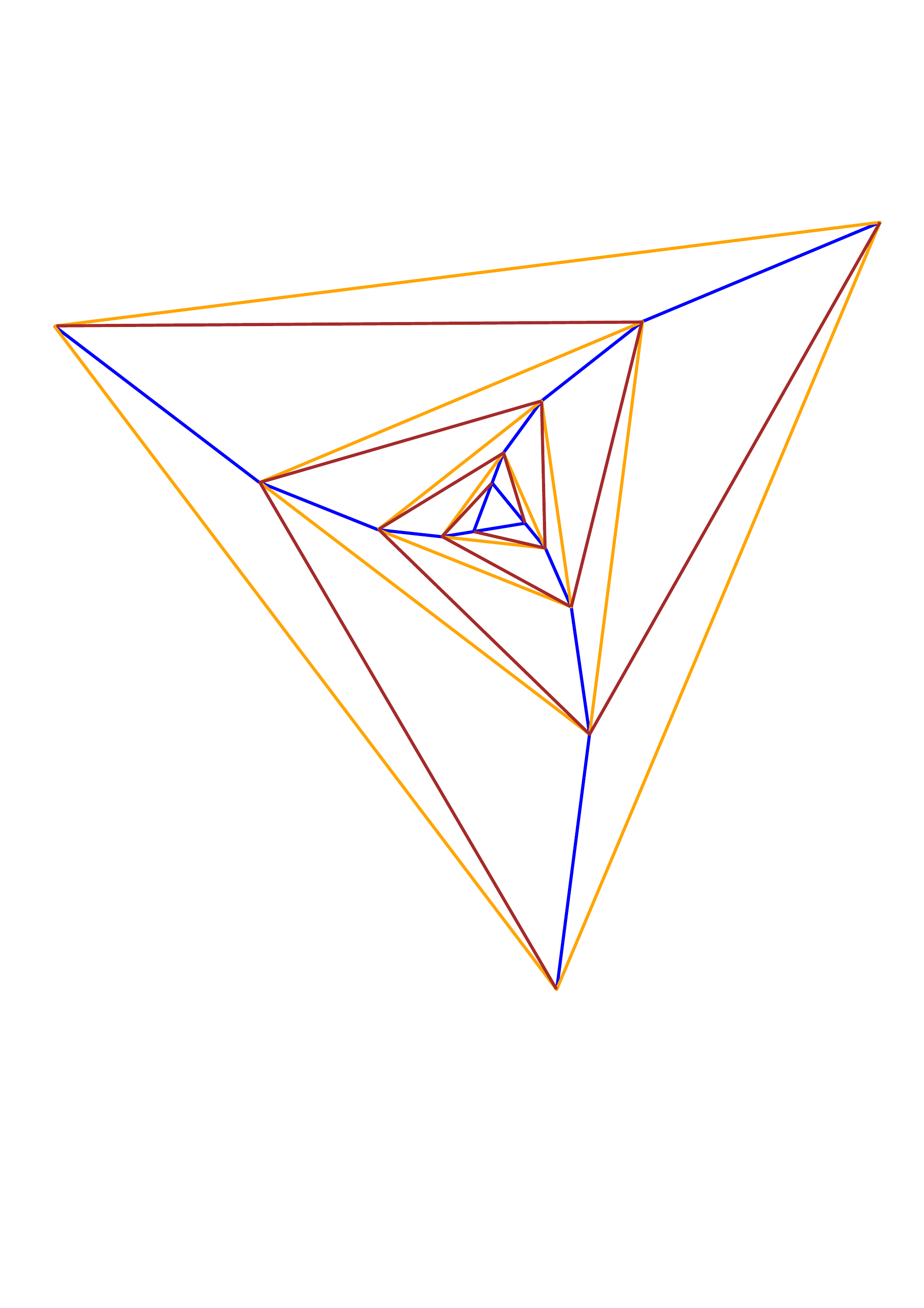}
    \end{subfigure}
    \hfill
    \begin{subfigure}[b]{0.58\textwidth}
      \centering
      \includegraphics[width=\linewidth]{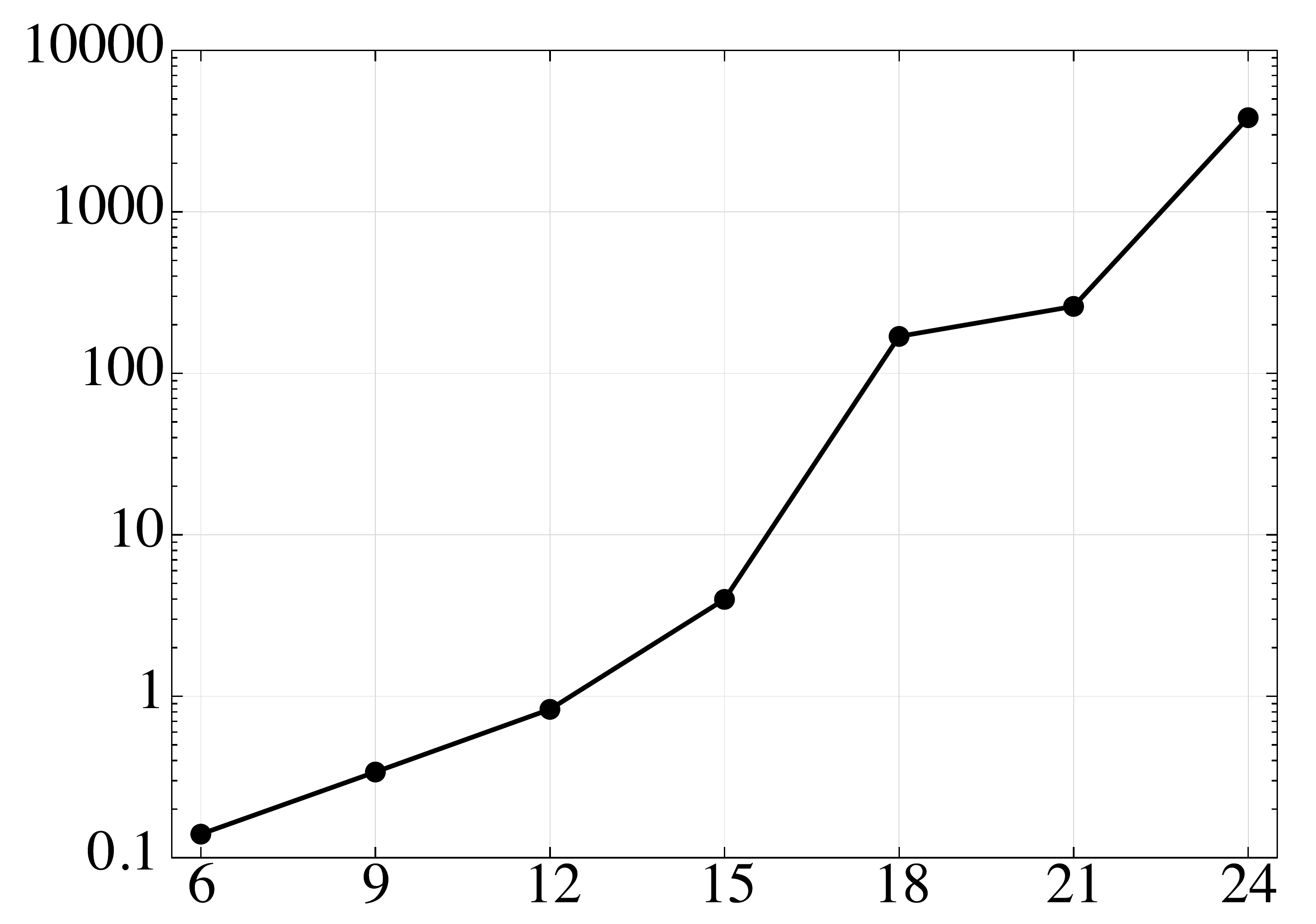}
    \end{subfigure}

    \begin{subfigure}[t]{0.38\textwidth}
      \centering
      \caption{optimal drawing of the triangulation~$G_k$ (with
        $n_k=3k$ vertices) of Dujmovi\'c et
        al.~\cite{Dujmovic2007DrawingsOP} using $2n_k-3$ segments}
       \label{fig:thetriangulationsexact}
    \end{subfigure}
    \hfill
    \begin{subfigure}[t]{0.58\textwidth}
      \centering
      \caption{runtime of the MIP applied to the graphs
        $G_2,G_3,\dots,G_8$ of Dujmovi\'c et
        al.~\cite{Dujmovic2007DrawingsOP}.  The numbers of vertices of
        the graphs are on the x-axis; the runtime in seconds is on the
        y-axis.  Note the log-scale at the y-axis.}
      \label{fig:chart}
    \end{subfigure}
    \caption{Testing the MIP: instances and runtime}
\end{figure}

\section{Discussion and Open Problems}
\label{sec:discussion}

\nv{As mentioned in the introduction, we now show that minimum-line
  drawings are indeed different from $\rho^1_2$-optimal drawings.
  Then we state some open problems regarding affine and spherical
  cover numbers.}

\begin{example}
\label{expl:rho_vs_ml}
\nv{
Minimum-line drawings are different from $\rho^1_2$-optimal drawings.
}
\end{example}
\begin{proof}
  \nv{We provide a graph $G$ with $\rho^1_2(G)=5$ and $\myseg(G)\le6$;
    see Fig.~\ref{fig:affine_cover}.  Then we show that every
    embedding of~$G$ on any arrangement of five straight lines
    consists of at least seven segments.}

  \begin{figure}[htb]
      \begin{subfigure}[b]{0.3\textwidth}
        \centering
        \includegraphics[page=1]{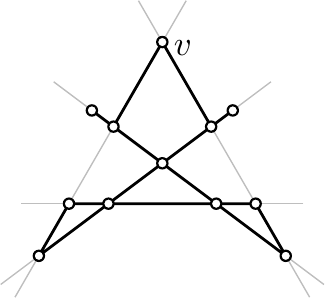}
        \caption{a $\rho^1_2$-optimal drawing of $G$ on 5 lines and 7
          segments}
        \label{fig:affine_cover}
      \end{subfigure}
      \hfill
      \begin{subfigure}[b]{0.33\textwidth}
        \centering
        \includegraphics[page=2]{ml_vs_rho}
        \caption{a minimum-line drawing of $G$ on 6 lines and 6 segments}
        \label{fig:min_line}
      \end{subfigure}
      \hfill
      \begin{subfigure}[b]{0.28\textwidth}
        \centering
        \includegraphics[page=3]{ml_vs_rho}
        \caption{a star-shaped arrangement of 5 straight lines}
        \label{fig:star}
      \end{subfigure}
      \caption{A graph~$G$ that shows that $\rho^1_2$-optimal drawing
        and minimum-line drawings are indeed different.}
  \end{figure}

  \nv{Chaplick et al.~\cite{cflrvw-dgflf-GD16} defined a vertex of a
    planar graph to be \emph{essential} if it has degree at least~3 or
    belongs to a cycle of length~3.  They observe that in any drawing
    of a graph any essential vertex is shared by two edges not lying
    on the same line.  Observe that~$G$ has nine essential vertices.
    Hence, any arrangement of straight lines that cover a drawing
    of~$G$ consists of at least five straight
    lines (with potentially ten intersection points).  Moreover, for
    the same reason, an arrangement of five straight lines covering
    a drawing of~$G$ must be \emph{simple}, that is, every two straight
    lines intersect and no three straight lines have a point in
    common.  There is only one such arrangement of five straight lines
    in the projective plane~\cite{gruenbaum}.  This combinatorially
    unique arrangement is star-shaped; see Fig.~\ref{fig:star}.}  

  \nv{The
    graph~$G$ has three triangles that are attached via one vertex in
    a chain-like fashion.  These triangles can only be embedded into
    faces of the arrangement; otherwise there would be a triangle that
    consumes two additional intersection points of the arrangement.
    Therefore, there is only one way to embed the three triangles on
    the arrangement, namely on some three consecutive spikes of the
    star.  This forces the degree-2 vertex~$v$ (see
    Fig.~\ref{fig:affine_cover}) to be on a bend (incident to two
    segments in the drawing) and makes the embedding combinatorially
    unique.  In this embedding of~$G$ we have seven segments, but
    $\myseg(G) \le 6$; see Fig.~\ref{fig:min_line}.}

  \nv{Finally, if a graph does not have a
    drawing with six segments covered by five straight lines in the
    projective plane, it also does not have one in the Euclidean plane,
    because we can embed a line arrangement in the Euclidean
    plane into one in the projective plane preserving the number of
    segments.  So we need at least six straight lines for a drawing
    with six segments.}
\end{proof}

\noindent
We close with some open problems.

\nv{We conjecture that our drawings in Figures~\ref{fig:dod_seg}
  and~\ref{fig:ico_seg_opt} are optimal.  This would mean that
  $\rho^1_3(D)=10$ and $\rho^1_3(I)=15$, but we have no proof for
  this.}

\nv{Is there a family~$\mathcal{G}$ of graphs such that the affine
  cover number $\rho^l_d(G)$ of every graph $G \in \mathcal{G}$ can be
  bounded by a function of the spherical cover number $\sigma^l_d(G)$?
For example in the plane (recall that
we only consider planar graphs there), $\rho^1_2(G) \in O(n)$, since
we can use a different line for each single edge, moreover, according
to Proposition~\ref{prop:combinatorial_argument}\ref{thm:rho_mod_1}
$\sigma^1_2(G) \in \Omega(\sqrt{n})$, therefore, we have that
$\rho^1_2(G) \in O(\sigma^1_2(G)^2)$.
For the given family of graphs
can this relation be tightened?
For example, Chaplick et al.~\cite[Example 22]{cflrvw-dgflf-GD16}
showed that there are triangulations for which
$O(\sqrt{n})$ lines suffice.  It would be even more interesting to
find families of graphs where there is an asymptotic difference
between the two cover numbers.}

We have already seen that $\sigma^2_3(K_n)$ grows asymptotically more
slowly than $\rho^2_3(K_n)$.
Is there a family of planar graphs where
$\sigma^1_2$ grows asymptotically more slowly than $\rho^1_2$?

\nv{Chaplick et al.~\cite{cflrvw-dgflf-GD16} showed that the hierarchy
  of affine cover numbers collapses in the following sense: For every
  graph~$G$, for every integer $d>3$, and for every integer~$l$ with
  $1 \le l \le d$, it holds that $\rho^l_d(G)=\rho^l_3(G)$.  The proof
  of this fact is based on affine maps, which transform planes into
  planes, but not spheres into spheres, so we don't know whether the
  hierarchy of spherical cover numbers collapses, too.}

%% --------------------------------------------------------------------
%       Bibliography
%% --------------------------------------------------------------------

%\clearpage
\bibliographystyle{abbrvurl}
\bibliography{abbrv,refs}

\end{document}